\newtheorem{theorem}{Theorem}
\newtheorem{lemma}{Lemma}
\newtheorem{proposition}{Proposition}
\global\long\def\N{\mathbb{N}}%
\global\long\def\OPT{\mathrm{OPT}}%
\newcommand{\eps}{\varepsilon}
\newcommand{\bI}{\bar{I}}
\newcommand{\hI}{\hat{I}}
\newcommand{\calL}{{\mathcal L}}
\title{A simpler QPTAS for scheduling jobs with precedence constraints}
\author{
Syamantak Das\\ IIIT-Delhi, India \\
\texttt{syamantak@iiitd.ac.in}
        \and 
        Andreas Wiese\\ Technical University of Munich\\
        \texttt{andreas.wiese@tum.de}        
}
\begin{document}
    \maketitle

\begin{abstract}
		We study the classical scheduling problem of minimizing the makespan
		of a set of unit size jobs with
		precedence constraints on parallel identical machines. Research on the problem dates back to the
		landmark paper by Graham from 1966 who showed that the simple List
		Scheduling algorithm is a $(2-\frac{1}{m})$-approximation. Interestingly,
		it is open whether the problem is NP-hard if $m=3$ which is one of
		the few remaining open problems in the seminal book by Garey and Johnson.
		Recently, quite some progress has been made for the setting that $m$
		is a constant. In a break-through paper, Levey and Rothvoss presented
		a $(1+\epsilon)$-approximation with a running time of $n^{(\log n)^{O((m^{2}/\epsilon^{2})\log\log n)}}$~{[}STOC
		2016, SICOMP 2019{]} and this running time was improved to quasi-polynomial
		by Garg {[}ICALP 2018{]} and to even $n^{O_{m,\epsilon}(\log^{3}\log n)}$
		by Li {[}SODA 2021{]}. These results use techniques like LP-hierarchies,
		conditioning on certain well-selected jobs, and abstractions like
		(partial) dyadic systems and virtually valid schedules. 
		
		In this paper, we present a QPTAS for the problem which is arguably
		simpler than the previous algorithms. We just guess the positions
		of certain jobs in the optimal solution, recurse on a set of guessed
		subintervals, and fill in the remaining jobs with greedy routines.
		We believe that also our analysis is more accessible, in particular since we do not
		use (LP-)hierarchies or abstractions of the problem like the ones above, but we guess properties
		of the optimal solution directly.
		
	\end{abstract}
	
	\section{Introduction}
	
	A classical problem in scheduling theory is the problem to schedule
	jobs on parallel machines in order to minimize the makespan, while
	obeying precedence constraints between the jobs. It goes back to the
	1966 when Graham proved in his seminal paper~\cite{graham1966bounds}
	that the simple List Scheduling algorithm yields a is a $(2-\frac{1}{m})$-approximation
	algorithm. Formally, the input consists of a set $J$ of $n$ jobs,
	a number of machines $m\in\N$, and each job $j\in J$ is characterized
	by a processing time $p_{j}\in\N$. We seek to schedule them non-preemptively
	on $m$ machines in order to minimize the time when the last job finishes,
	i.e., to minimize the makespan. Additionally, there is a precedence
	order $\prec$ which is a partial order between the jobs. Whenever
	$j\prec j'$ for two jobs $j,j'\in J$ then job $j'$ can only be
	started when $j$ has already finished. Given that the List Scheduling
	algorithm is essentially a simple greedy routine, one may imagine
	that one can achieve a better approximation ratio with more sophisticated
	algorithmic techniques. However, Svensson showed that even for unit
	size jobs (i.e., $p_{j}=1$ for each $j\in J$) there can be no $(2-\epsilon)$-approximation algorithm
	for any $\epsilon>0$~\cite{svensson2011hardness}, assuming a variant
	of the Unique Games Conjecture. Hence, under this conjecture List
	Scheduling is the essentially best possible algorithm. Slight improvements
	are known for unit size jobs: there is an algorithm by Coffman and
	Graham~\cite{coffman1972optimal} which computes an optimal solution
	when $m=2$, and a which is a $(2-\frac{2}{m})$-approximation algorithm
	for general $m$, as shown by Lam and Sethi~\cite{lam1977worst}.
	Also, there is an $(2-\frac{7}{3m+1})$-approximation algorithm known
	due to Gangal and Ranade~\cite{GANGAL20081139}. In fact, the setting
	of unit-size jobs is interesting since then the complexity of the
	problem stems purely from the precedence constraints and not from
	the processing times of the jobs (which might encode problems like
	\textsc{Partition}). We assume this case from now.
	
	In practical settings $m$ might be small, e.g., $m$ might be the
	number of processors in a system, or the number of cores of a CPU.
	Thus, it is a natural question whether better approximation ratios
	are possible when $m$ is a constant. Note that the mentioned lower
	bound of $2-\epsilon$~\cite{svensson2011hardness} does not hold
	if $m=O(1)$. For $m=2$ the mentioned algorithm by Coffman and Graham~\cite{coffman1972optimal}
	computes an optimal solution; however, even if $m=3$ it is not known
	whether the problem is NP-hard! In fact, it is one of the few remaining
	open problems in the book by Garey and Johnson~\cite{garey1979computers}.
	
	In a break-through result, Levey and Rothvoss presented a $(1+\epsilon)$-approximation
	with a running time of $n^{(\log n)^{O((m^{2}/\epsilon^{2})\log\log n)}}$~\cite{levey2019}.
	Subsequently, the running time was improved by Garg~\cite{garg:LIPIcs:2018:9063}
	to $n^{O_{m,\epsilon}(\log^{O(m^{2}/\epsilon^{2})}n)}$ which is quasi-polynomial.
	Both algorithms are based on the natural-LP relaxation of the problem,
	(essentially) lifted by a certain number $r$ of rounds of the Sherali-Adams
	hierarchy, and it can be solved in time $n^{O(r)}$. Given the optimal
	LP-solution $x^{*}$, they \emph{condition }on certain variables in
	the support of $x^{*}$, which effectively fixes time slots for the
	corresponding jobs. Each conditioning operation changes not only the
	variable that one conditions on, but possibly also other variables
	in the support. After a well-chosen set of conditioning operations,
	they recurse into smaller subintervals and give each of them a copy
	of the current LP-solution (which might be different from $x^{*}$
	due to the conditioning operations). Intuitively, in each recursive
	call for some subinterval $I$ they seek to schedule jobs that can
	only be scheduled during $I$ according to the previous conditionings
	and the precedence constraints;
	they call these jobs \emph{bottom jobs} and the other jobs \emph{top
	}and \emph{middle jobs}. The middle jobs can be discarded. For the
	top jobs, they first use a matching argument to show that most of
	the top jobs can be inserted if one can ignore the precedence constraints
	between top jobs. Knowing this, they insert most of the top jobs with
	a variation of Earliest-Deadline-First (EDF), such that all precedence constraints are satisfied;
	some of the top jobs are discarded in the process. The discarded jobs
	are later inserted in a greedy manner which is affordable since they
	are very few. 
	
	A different approach is used by Li~\cite{li2021towards} who improved
	the running time further to $n^{O_{m,\epsilon}(\log^{3}\log n)}$.
	Instead of working with an LP, he guesses directly certain properties
	of the optimal solution. While in the above argumentation each conditioning
	step costs a factor $n^{O(1)}$ in the running time, he argues that---roughly
	speaking---most of the time the information he guesses is binary
	and hence costs only a factor of 2 in the running time. More precisely,
	he guesses properties of a technical abstraction based on \emph{dyadic
		systems}, \emph{partial dyadic systems}, and \emph{virtually valid
		schedules}. On a high level, he shows that based on the optimal schedule
	one can define a corresponding dyadic system and a virtually valid
	schedule for it, at the cost of discarding a few jobs. His algorithm
	then searches for the dyadic system and a virtually valid schedule
	for it that discards as few jobs as possible. Then, he shows that
	based on them, he can construct an actually valid schedule, which
	discards only few additional jobs.

	\subsection{Our Contribution}
	
	In this paper we present a QPTAS for the makespan minimization problem with unit size jobs on a constant number of machines along with precedence constraints ($Pm|prec,p_{j}=1|C_{\max}$ in the three-field notation) which
	is arguably simpler than the $(1+\epsilon)$-approximation algorithms
	sketched above. We do not use an LP-formulation and in particular
	no LP-hierarchy or a similar approach based on conditioning on variables.
	Instead, we guess properties of the optimal solution directly, similarly
	as Li \cite{li2021towards}. However, we do not use the reduction
	to dyadic systems or a similar abstraction but work with the optimal
	solution directly. We believe that this makes the algorithm and the
	analysis easier to understand. Our running time is $n^{O_{m,\eps}((\log n)^{m/\eps})}$ so it is
	asymptotically better than the running time by Garg~\cite{garg:LIPIcs:2018:9063}
	up to hidden constants.
	
	Our algorithm is actually pretty simple. Let $T$ be the optimal makespan
	(which we guess). For a parameter $k=(\log n)^{O(1)}$ we guess the
	placement of $k$ jobs in $\OPT$ and a partition of $[0,T)$ into
	at most $k$ intervals. For each interval $I$, there are some jobs
	that need to be scheduled during $I$ according to our guesses and
	the precedence constraints. We recurse on each interval $I$ and its
	corresponding jobs. Then, we add all remaining jobs with a simple
	variant of EDF where the release dates and deadlines are defined such
	that the precedence constraints between these jobs and the other jobs
	(that we recursed on) are satisfied. For the correct guesses, we show
	that the resulting schedule discards at most $O(\epsilon T)$ jobs,
	and we add those jobs at the end with a simple greedy routine.
	
	In our analysis, we use a hierarchical decomposition of intervals,
	like the previous results~\cite{li2021towards,levey2019,garg:LIPIcs:2018:9063}.
	In contrast to those, we define the decomposition such that each interval
	is subdivided into $O(\log n/\eps)$ subintervals (instead of 2 subintervals)
	since this makes the analysis easier. Based on the optimal solution,
	for each level we identify certain jobs that we want to guess in that
	level later. Also, we assign a level to each job. Via a shifting step,
	we show that we can discard the jobs from intuitively every $(m/\epsilon)$-th
	level. Based on these levels, we argue that there are guesses for
	our algorithm that yield a small total number of discarded jobs. Intuitively,
	we guess the jobs from the first $m/\epsilon$ levels that we identified
	above and the intervals of the $(m/\epsilon+1)$-th level. We recurse
	on the latter intervals. When we insert the jobs of the first $m/\epsilon$
	levels via EDF (those jobs that we did not guess already), the jobs
	from our recursion and their precedence constraints dictate for each
	inserted job $j$ a time window $[r_{j},d_{j})$. It might happen
	that the position of $j$ in the optimal solution is not contained
	in $[r_{j},d_{j})$, but we show that it is always contained in the
	larger time window of
	$[r_{j}-\lambda ,d_{j}+ \lambda]$ for a small value $\lambda >0$.
	We show that we need to discard at most $O(mT\frac{\epsilon}{\log n})$
	jobs to compensate for this error and due to the precedence constraints
	between the inserted jobs. We analyze EDF directly and do not need
	to go via a matching argumentation as done in~\cite{levey2019}.
	It turns out that our algorithm recurses for at most $O(\frac{\eps}{m}\log n)$ levels and hence this 
	yields $O(\frac{\eps\log n}{m}\cdot mT\frac{\epsilon}{\log n})=O(\epsilon^2 T)$ discarded
	jobs in total. 
	
	While our analysis borrows ideas from the mentioned previous results
	\cite{li2021towards,levey2019,garg:LIPIcs:2018:9063} we believe that
	our algorithm and our analysis are simpler and more accessible.

	\section{Algorithm}
	
	We present a simple QPTAS for the problem $Pm|prec,p_{j}=1|C_{\max}$.
	Let $\epsilon>0$ and assume w.l.o.g.~that $1/\epsilon\in\N$. We
	guess $T:=\OPT$ and assume w.l.o.g. that $T$ is a power of 2 (if
	this is not the case, then we can add some dummy jobs that need to
	be processed after all other jobs; note that a $(1+\epsilon)$-approximate
	solution for this larger instance is a $(1+2\epsilon)$-approximate
	solution for the original instance). Also, w.l.o.g. we assume that
	in $\OPT$ each job starts and ends at an integral time point, i.e.,
	during a time interval of the form $[t,t+1)$ for some $t\in\N$.
	We will refer to such a time interval as a \emph{time slot.} Furthermore, we assume that
	w.l.o.g. that the precendence constraints are transitive, i.e., if $j \prec j'$
	and $j' \prec j''$ then also $j \prec j''$.
	
	Our algorithm works on a `guess and recurse' framework. Define a parameter
	$k:=(m\log n/\eps)^{m/\epsilon+1}$. The reader can think of the parameter $k$ as $O_{m,\eps} ((\log n)^{m/\eps + 1})$, where $O_{m,\eps}()$ hides constants that are only dependent on $m, \eps$. Our algorithm has three steps. First,
	we guess up to $k$ jobs from $\OPT$ and their time slots in $\OPT$,
	i.e., we try all combinations of up to $k$ input jobs and all combinations
	for their time slots to schedule them. Then, we guess a partition
	of $[0,T)$ into at most $k$ intervals, i.e., we try all combinations
	of partitioning $[0,T)$ into at most $k$ intervals. By allowing
	empty intervals, we assume w.l.o.g. that we guess exactly $k$ intervals
	and denote them by $I_{1},...,I_{k}$. Let $J_{guess}$ denote the
	guessed jobs. There might be precedence constraints between jobs in
	$J_{guess}$ and jobs in $J\setminus J_{guess}$. In particular, these
	precedence constraints might dictate that some job $j\in J\setminus J_{guess}$
	needs to be scheduled within some interval $I_{j}$. In this case,
	we say that $j$ is a \emph{bottom job}. Let $J_{bottom}\subseteq J\setminus J_{guess}$
	denote the set of all bottom jobs. We call each job $j\in J\setminus(J_{guess}\cup J_{bottom})=:J_{top}$
	a \emph{top job. }
	
	Given our guess for the jobs $J_{guess}$ and their time slots and
	our guessed partitioning of $[0,T)$, we make $k$ recursive calls:
	one for each interval $I_{i}$ with $i=1,2,\cdots,k$. Let us denote
	by $J_{bottom}^{(i)}$ the subset of jobs in $J_{bottom}$ that need
	to be scheduled within $I_{i}$ according to our guesses above, and
	let $J_{guess}^{(i)}$ denote the jobs in $J_{guess}$ for which we
	guessed a time slot within $I_{i}$. We make a recursive call on the
	interval $I_{i}$ whose input are the jobs $J_{bottom}^{(i)}\cup J_{guess}^{(i)}$
	and the guessed time slots for the jobs in $J_{guess}^{(i)}$. In
	this recursive call, we want to compute a schedule in which 
	\begin{itemize}
		\item all jobs in $J_{guess}^{(i)}$ are scheduled in the time slots that
		we had guessed for them, 
		\item a (hopefully very large) subset of the jobs in $J_{bottom}^{(i)}$
		are scheduled; we denote by $J_{disc}^{(i)}\subseteq J_{bottom}^{(i)}$
		the jobs in $J_{bottom}^{(i)}$ that were not scheduled, we call them
		the \emph{discarded }jobs, and
		\item we obey the precedence constraints between the jobs in $J_{bottom}^{(i)}\cup J_{guess}^{(i)}$.
		We ignore all precedence constraints that involve the jobs in $J_{disc}^{(i)}$.
	\end{itemize}
	Suppose that we are given a solution from each recursive call. We
	define $J_{disc}:=\bigcup_{i=1}^{k}J_{disc}^{(i)}$. We ignore these
	discarded jobs for now. We want to schedule the jobs in $J_{top}$.
	Recall that these are jobs in $J\setminus J_{bottom}\cup J_{guess}$.
	To this end, for each job $j\in J_{top}$ we define an artificial
	release date $r_{j}$ and an artificial deadline $d_{j}$. Let $j\in J_{top}$.
	We define $r_{j}$ to be the
	earliest start time of an interval $I_i$ at which
	each job $j'\in J_{guess}\cup J_{bottom}$
	with $j'\prec j$ has completed; we define $r_{j}:=0$ if there is
	no such job $j'$. Similarly, we define $d_{j}$ to be the latest
	end time of an interval $I_i$ at which no job $j''\in J_{guess}\cup J_{bottom}$ with $j\prec j''$
	has already started; again, if there is no such job $j''$ we define $d_{j}:=T$.
	In order to find slots for the jobs in $J_{top}$ we use the following
	variation of the Earliest Deadline First (EDF) algorithm. We sweep
	the time axis from left to right. For each time $t=0,1,2,...$ we
	consider the not yet scheduled jobs $j\in J_{top}$ with $r_{j}\le t$
	whose predecessors in $J_{guess}\cup J_{bottom}\setminus J_{disc}$
	we have already scheduled before time $t$. We sort these jobs non-decreasingly
	by their deadlines (breaking ties arbitrarily) and add them in this
	order to the machines that are idle during $[t,t+1)$. We do this
	until no more machine is idle during $[t,t+1)$. It might happen that
	a job $j\in J_{top}$ misses its deadline at the current time $t$,
	i.e., it holds that $t=d_{j}$ but $j$ has not been scheduled by
	us at any slot $r_{j}\leq t'\leq t$ and during $[t',t'+1)$ all machines
	are busy. In this case we add $j$ to the set $J_{disc}$. Among all
	our guesses for the jobs and the partition into intervals, we output
	the solution in which at the very end the smallest number of jobs
	is in the set $J_{disc}$ (breaking ties arbitrarily).
	
	Each recursive call for an interval works similarly as the main call
	of the recursion described above. The (straightforward) differences
	are the following: the input of each recursive call consists of the interval
	$\bar{I}$, a set of jobs $\bar{J}_{guess}$ which were guessed in
	previous levels in the recursion, together with their guessed time
	slots, and a set of (not yet scheduled) jobs $\bar{J}$. We guess
	the time slots for up to $k$ guessed jobs $J_{guess}\subseteq\bar{J}$,
	and we require that these guesses do not violate the precedence constraints
	with the jobs in $\bar{J}_{guess}$. Also, we guess a partition of
	$I$ into $k$ intervals (rather than a partition of $[0,T)$). The
	input for each recursive call for a subinterval $\bar{I}_{i}$ of
	$\bar{I}$ consists of $\bar{I}_{i}$, $\bar{J}_{guess}\cup J_{guess}$,
	and the jobs in $\bar{J}$ that need to be scheduled within $\bar{I}_{i}$
	according to our guesses for $\bar{J}_{guess}\cup J_{guess}$. We
	return the computed schedule for $\bar{I}$ for a subset of the jobs
	$\bar{J}_{guess}\cup\bar{J}$ and the discarded jobs in $\bar{J}_{guess}\cup\bar{J}$.
	
	
	If the algorithm is called on an interval of length 1 then we skip
	the step of partitioning the interval further into at most $k$ subintervals
	(and in particular we do not recurse anymore). We will show later
	that there are guesses that lead to an $(1+\epsilon)$-approximate
	solution such that the recursion depth is $\frac{\eps}{m}\log n$. In order to enforce
	that the recursion depth is $\frac{\eps}{m}\log n$ (limiting the running time
	of the algorithm), we define that a recursive call at recursion depth
	$\frac{\eps}{m}\log n+1$ simply outputs a solution in which all of the jobs in
	$\bar{J}$ are discarded and the algorithm does not recurse further.
	
	After running the recursive algorithm described above, we need to
	schedule the jobs in $J_{disc}$. Intuitively, for each such job $j\in J_{disc}$
	we create an empty time slot that we add into our schedule and inside
	which we schedule $j$. This will increase our makespan by $|J_{disc}|$.
	Formally, we consider the jobs $j\in J_{disc}$ in an arbitrary order.
	For each job $j\in J_{disc}$ we determine a time $t$ such that all
	its predecessors in our current schedule have finished by time $t$,
	but none of its successors in our current schedule have started yet.
	Such a time $t$ always exists since we show later that we obtain
	a feasible schedule for all jobs in $J \setminus J_{disc}$ and we assumed
	that the precedence constraints are transitive. We insert
	an empty time slot $[t,t+1]$ into our schedule, i.e., we move all
	jobs scheduled during $[t,\infty)$ by one unit to the right, and
	we schedule $j$ during $[t,t+1]$. This completes the description
	of our algorithm.

	\section{Analysis}
	
	In this section, we prove that the above algorithm is a QPTAS. 
	\begin{lemma}
		\label{lemma:runtime}
		The above algorithm runs in time $n^{(\log n)^{O_{m,\eps}(m/\epsilon)}}$ .
	\end{lemma}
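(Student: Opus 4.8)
The plan is to bound the running time by two ingredients: the branching factor at each node of the recursion tree (i.e., the number of guesses tried per recursive call), and the depth of the recursion tree. First I would count the guesses made in a single call. We guess up to $k$ jobs together with their time slots: there are at most $n^{k}$ ways to pick the jobs and at most $T^{k} \le n^{k}$ ways to assign them to time slots (using $T = \mathrm{OPT} \le n$ since each job occupies a distinct slot on some machine, so $T \le n$). Then we guess a partition of the current interval into $k$ subintervals, which—since interval endpoints are among the $O(T) \le O(n)$ integral time points—costs at most $n^{O(k)}$. So the total number of guesses at one node is $n^{O(k)}$, and for each guess the work done outside the recursive calls (computing $J_{bottom}$, $J_{top}$, the release dates and deadlines, running the EDF sweep, and finally the greedy insertion of $J_{disc}$) is polynomial in $n$. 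Hence each node contributes a factor $n^{O(k)}$.

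Next I would bound the size of the recursion tree. Each call spawns exactly $k$ children (one per subinterval $I_i$), and by the stopping rule the depth is at most $\frac{\eps}{m}\log n + 1$; calls on length-$1$ intervals and calls at the maximal depth do not recurse. Therefore the number of nodes in the recursion tree is at most $k^{\frac{\eps}{m}\log n + 1} = k^{O((\eps/m)\log n)}$. Multiplying the per-node cost by the number of nodes gives a total running time of
\[
\left(n^{O(k)}\right)^{k^{O((\eps/m)\log n)}} = n^{O(k)\cdot k^{O((\eps/m)\log n)}}.
\]

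Finally I would simplify the exponent using $k = (m\log n/\eps)^{m/\eps+1}$. Taking logarithms, $\log k = O_{m,\eps}(\log\log n)$, so $k^{O((\eps/m)\log n)} = 2^{O((\eps/m)\log n \cdot \log k)} = 2^{O_{m,\eps}(\log n \cdot \log\log n)} = (\log n)^{O_{m,\eps}(\log n)}$, which is $n^{(\log n)^{O_{m,\eps}(1)}}$—wait, more carefully: $2^{O_{m,\eps}(\log n\log\log n)} = n^{O_{m,\eps}(\log\log n)}$. Combining with the factor $n^{O(k)} = n^{(\log n)^{O_{m,\eps}(m/\eps)}}$ (since $O(k) = (\log n)^{O_{m,\eps}(m/\eps)}$), the first factor dominates, and the total is $n^{(\log n)^{O_{m,\eps}(m/\epsilon)}}$, as claimed. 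I expect the only mildly delicate point to be the bookkeeping in this last exponent-arithmetic step—making sure that the $k^{\text{depth}}$ factor is genuinely absorbed into the $n^{O(k)}$ factor rather than dominating it—but this follows since $\log(n^{O(k)}) = O(k)\log n$ grows like $(\log n)^{\Theta(m/\eps)}\cdot\log n$ while $\log(\text{tree size}) = O((\eps/m)\log n\log k) = O_{m,\eps}(\log n\log\log n)$ is only polylogarithmic in $n$ times a constant, hence far smaller. No single step is a real obstacle; the lemma is essentially a careful accounting argument.
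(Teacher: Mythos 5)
Your overall strategy is the same as the paper's (bound the number of guesses per call by $n^{O(k)}$, bound the recursion depth by $O(\frac{\eps}{m}\log n)$, and combine), but the recursion-tree accounting in the middle of your argument is wrong in a way that needs fixing. First, a call does \emph{not} spawn only $k$ children: it enumerates all $n^{O(k)}$ guesses, and \emph{for each guess} it makes $k$ recursive calls (different guesses produce different partitions and different job sets, so these are distinct subproblems). Hence the number of nodes at depth $d$ is $\bigl(n^{O(k)}\cdot k\bigr)^{d}$, not $k^{d}$, and your tree-size bound $k^{O((\eps/m)\log n)}=n^{O_{m,\eps}(\log\log n)}$ undercounts. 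Second, the displayed expression $\bigl(n^{O(k)}\bigr)^{k^{O((\eps/m)\log n)}}$ is the per-node cost \emph{raised to} your node count, not multiplied by it as you announce; read literally it equals $n^{O(k)\cdot n^{O_{m,\eps}(\log\log n)}}$, which is vastly larger than the claimed bound, and your final simplification quietly reverts to treating it as a product. So as written the proof both undercounts the tree and then manipulates an expression that does not represent the quantity you defined.

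The repair is short and lands exactly on the paper's one-line computation: total time $=$ (work per call, excluding recursion) $\times$ (number of calls) $= n^{O(k)}\cdot\bigl(n^{O(k)}\cdot k\bigr)^{O(\log n)} = n^{O(k\log n)}$, and since $k\log n = (m\log n/\eps)^{m/\eps+1}\log n = (\log n)^{O_{m,\eps}(m/\eps)}$, this is $n^{(\log n)^{O_{m,\eps}(m/\eps)}}$. Note that the correct bound genuinely needs the exponent $k\log n$ (per-call guessing cost compounded over the depth); it is not the case that the $n^{O(k)}$ factor of a single call dominates everything else, as your final sentence suggests.
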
 
	\begin{proof} For the running time, we observe that
		there are at most $n^{O(k)}$ choices for the guessed $k$ jobs, at
		most $T^{O(k)}=(n)^{O(k)}=n^{O(k)}$ choices for their time slots,
		and similarly $n^{O(k)}$ choices for the at most $k$ intervals that
		we guess. Since we bound the recursion depth to be at most $\log n+1$,
		this yields a running time of $n^{O(k\log n)}=n^{(\log n)^{O_{m,\eps}(m/\epsilon)}}$.
	\end{proof} 
	Now we prove the approximation ratio of our algorithm.
	To this end, we define guesses for jobs and their time slots and intervals
	in each call of our recursion, such that for these guesses our algorithm
	outputs a schedule with makespan at most $1+6\epsilon$.
	
	\subsection{Laminar family of intervals}
	
	For this, we define a laminar family $\calL$ of intervals. Recall
	that we assumed that $T$, the guessed makespan, is a power of 2%
	. We define that the entire interval $[0,T)$ forms the (only) interval
	of level 0. Let us define $\rho=\left\lceil \log(\log n/\eps)\right\rceil $.
	Consider an interval $I$ of some level $\ell=0,1,2,\cdots$ (we will
	argue the number of levels later). If $|I|\ge 2^\rho$ then $I$ is partitioned
	into $2^{\rho}=\Theta(\log n/\eps)$ equal-sized intervals of length $|I|/2^{\rho}$.
	These intervals constitute the level $\ell+1$ of the family $\calL$.
	For each interval $I\in\calL$, we denote by $\ell(I)$ the level
	of $I$.
	
	If $1<|I|<\rho$ then $I$ is partitioned into $|I|$ intervals of
	level $\ell+1$ of length 1 each. If $|I|=1$ then $I$ is not partitioned
	further.
	\begin{lemma}
		The total number of levels in the laminar family $\calL$ is at most
		$\log n/\log(\log n/\eps)+1$. 
	\end{lemma}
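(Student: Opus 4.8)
The plan is to bound the number of levels by tracking how the interval lengths shrink as we descend the laminar family $\calL$. We start with the top interval $[0,T)$ of length $T = 2^{\log T}$ at level $0$. Each time we go down one level, as long as the current interval $I$ has length $|I| \ge 2^{\rho}$, the construction replaces $I$ by $2^{\rho}$ equal pieces of length $|I|/2^{\rho}$; that is, the length drops by a multiplicative factor of exactly $2^{\rho} = \Theta(\log n/\eps)$. So after $\ell$ such halving-type steps starting from $T$, an interval has length $T / 2^{\rho \ell}$. This process of dividing by $2^{\rho}$ can be repeated until the length drops below $2^{\rho}$, i.e.\ essentially until $2^{\rho\ell} \ge T$, which happens once $\ell \ge (\log T)/\rho$.

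First I would make this precise: let $\ell^{\star}$ be the largest level at which some interval still has length at least $2^{\rho}$. By the above, $\ell^{\star} \le (\log T)/\rho = \log T / \lceil \log(\log n/\eps)\rceil$. Since $T = \OPT \le n$ (the optimal makespan of $n$ unit jobs on $m \ge 1$ machines is at most $n$), we have $\log T \le \log n$, hence $\ell^{\star} \le \log n / \log(\log n/\eps)$. After level $\ell^{\star}$, any remaining interval has length strictly less than $2^{\rho}$; the construction then either splits it into unit-length intervals (which live one level below and are not split further) or, if it already has length $1$, stops. Either way, at most one additional level is created below level $\ell^{\star}$. Therefore the total number of levels is at most $\ell^{\star} + 1 \le \log n/\log(\log n/\eps) + 1$, as claimed.

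The only mild subtlety I expect is the bookkeeping around the ``small'' intervals: an interval of length $\ell$ with $1 < \ell < 2^{\rho}$ is split into $\ell$ unit intervals, and one must check this introduces exactly one further level and nothing deeper, since unit-length intervals are terminal. I would also note that the case distinction in the construction is written with $\rho$ in one place where $2^{\rho}$ is presumably meant ($1<|I|<\rho$ versus $|I| < 2^{\rho}$), but this does not affect the level count: in all cases an interval of length $< 2^{\rho}$ is resolved into unit intervals within one more level. Assembling these observations gives the bound; there is no real obstacle beyond being careful that $T \le n$ so that $\log T \le \log n$.
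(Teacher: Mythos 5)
Your proposal is correct and follows essentially the same route as the paper: bound the number of levels at which intervals have length at least $2^{\rho}$ by $\log T/\rho \le \log n/\log(\log n/\eps)$, and observe that at most one further level of unit intervals is added. Your extra remarks (the $\rho$ versus $2^{\rho}$ typo in the construction, and the justification $T\le n$) are sound and only make explicit what the paper leaves implicit.
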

	\begin{proof} By construction, each interval at a particular level
		$\ell=0,1,2,\cdots$ is of equal length. 
		Hence the length of
		an interval at level $\ell$ is at most $T/(\log n/\eps)^{\ell}$. Further,
		once the length of intervals of a level becomes less than $2^{\rho}\le 2\log n/\eps$,
		there could only be one additional level where every interval is of
		length 1. Hence the total number of levels could be at most $\log T/\log(\log n/\eps)+1\leq\log n/\log(\log n/\eps)+1$
	\end{proof}
	
	\subsection{Guessed, top, and bottom jobs}
	\label{sec:guesses}
	Next, we assign the jobs to levels. More precisely, for each level
	$\ell$ we define a set of \emph{guessed jobs} $J_{guess}^{(\ell)}$
	and a set of \emph{top jobs }$J_{top}^{(\ell)}$. The intuition is
	that later we want to guess the jobs in $\bigcup_{\ell'=0}^{\ell} J_{guess}^{(\ell')}$
	and the jobs in $\bigcup_{\ell'=0}^{\ell} J_{top}^{(\ell')}$ will form
	top jobs. We say that a \emph{chain of jobs} is a set of jobs $J'=\{j_{1},j_{2},...,j_{c}\}$
	for some $s\in\N$ such that $j_{i}\prec j_{i+1}$ for each $i\in\{1,...,c-1\}$,
	and we say that $c$ is the \emph{length }of the chain. 

	We define these sets $J_{guess}^{(\ell)}$ and $J_{top}^{(\ell)}$
	level by level in the order $\ell=0,1,2,...$ . Consider a level $\ell$.
	Let $I$ be an interval of level $\ell$. We initialize $J_{guess}^{(\ell)}=J_{top}^{(\ell)}=\emptyset$.
	Our plan is that we add jobs to $J_{guess}^{(\ell)}$ step by step.
	Let $J_{I}$ denote the jobs that can only be scheduled during $I$,
	assuming that we schedule the jobs in $J_{guess}^{(0)}\cup\cdots \cup J_{guess}^{(\ell-1)}$
	exactly as in $\OPT$. We say that a job $j\in J_{I}$ is \emph{flexible
	}if it can still be scheduled in more than one subinterval of level
	$\ell+1$ of $I$, assuming that we schedule the jobs in $J_{guess}^{(0)}\cup...\cup J_{guess}^{(\ell)}$
	exactly as in $\OPT$. Suppose that there is a chain $J'\subseteq J_{I}\setminus J_{guess}^{(\ell)}$
	of length at least $\eps|I|/2^{\left\lceil \log\log n\right\rceil}$
	that contains only flexible jobs. Then for each interval $I'$ of
	level $\ell+1$ we add to $J_{guess}^{(\ell)}$ the first and the
	last job from $J'$ that is scheduled during $I'$ in $\OPT$. If
	we guess these jobs in our algorithm, the effect is that each job
	in $J'$ that we did \emph{not }add to $J_{guess}^{(\ell)}$ can be
	scheduled only during one interval of level $\ell+1$. Hence, one
	way to think of this procedure is that we push these jobs one level
	down. We do this operation until there is no more chain $J'$ of length
	at least $\eps|I|/m2^{\left\lceil \log\log n\right\rceil}$ that contains
	only flexible jobs. We define that $J_{top,I}^{(\ell)}$ contains
	all remaining flexible jobs in $J_{I}$. We do this procedure for
	each interval $I$ of level $\ell$, and define at the end $J_{top}^{(\ell)}:=\bigcup_{I}J_{top,I}^{(\ell)}$.
	\begin{proposition}
		\label{prop:unique} For every job $j\in J$, there exists a unique
		$\ell\in\{0,1,2,\cdots,\rho\}$ such that $j\in J_{top}^{\ell}$. 
	\end{proposition}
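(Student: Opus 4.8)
The plan is to follow a single job $j\in J$ through the level-by-level construction of Section~\ref{sec:guesses} and to show that it is \emph{resolved} --- placed into $J_{guess}^{(\ell)}$ or into some $J_{top,I}^{(\ell)}$ --- at exactly one level $\ell$; Proposition~\ref{prop:unique} is then the statement restricted to the jobs that end up as top jobs (the guessed jobs form the complement, each with its own unique level). Two elementary observations set this up. First, at most one interval per level is relevant: distinct level-$\ell$ intervals of $\calL$ are disjoint, and the feasible window of $j$ --- the set of slots it may still occupy once the already-guessed jobs are pinned to their $\OPT$ positions and the precedence constraints to those jobs are imposed --- is nonempty and cannot be contained in two disjoint intervals, so there is at most one level-$\ell$ interval $I$ with $j\in J_I$. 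Second, resolution is final: once $j$ is added to $J_{guess}^{(\ell)}$ or to a set $J_{top,I}^{(\ell)}$, the construction sets $j$ aside, so $j$ does not belong to $J_{I'}$ for any $I'$ of level $>\ell$.

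For existence I would argue by descending through the levels. Assume $j$ is not yet resolved before level $\ell$ and $j\in J_I$ for the (unique, by the first observation) level-$\ell$ interval $I$. Once the chain-pulling loop on $I$ has terminated, $j$ lies in exactly one of three cases: (a) $j$ is a first or last job of a pulled chain in some level-$(\ell+1)$ subinterval, so $j$ enters $J_{guess}^{(\ell)}$; (b) $j$ is flexible in $I$ and was not pulled, so $j$ enters $J_{top,I}^{(\ell)}$; or (c) $j$ is not flexible in $I$, and then --- since $j\in J_I$ can still be scheduled in at least one subinterval of $I$ --- its feasible window is contained in a single level-$(\ell+1)$ subinterval $I'$, so $j\in J_{I'}$ with $|I'|<|I|$. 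Cases (a) and (b) resolve $j$ at level $\ell$; case (c) moves $j$ to level $\ell+1$ on a strictly shorter interval. Since $\calL$ has only finitely many levels (by the previous lemma) and the interval lengths along this chain strictly decrease until they reach $1$, and a length-$1$ interval has no subinterval so case (c) cannot occur there (all of its still-unresolved jobs being placed into its $J_{top,I}^{(\ell)}$), the job $j$ is resolved at some level.

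For uniqueness I would invoke the finality of resolution: a resolved job appears in no $J_{I'}$ of higher level, hence is never resolved again, and it cannot have been resolved at a smaller level either, since then it would already have been set aside before level $\ell$. Combined with the first observation, this gives that $j$ lies in $J_{top}^{(\ell)}$ for at most one $\ell$, and existence supplies at least one for the jobs that are not guessed. The step I expect to be the main obstacle is justifying case (c) rigorously --- that after the chain-pulling loop on $I$ ends, every still-unresolved non-flexible job of $J_I$ is genuinely pinned to one level-$(\ell+1)$ subinterval (this is where one uses the stated effect of guessing the first and last jobs of each pulled chain) --- together with handling the degenerate length-$1$ intervals, where no job is flexible; past those points the argument is just bookkeeping on the geometric shrinking of interval lengths and the level count from the earlier lemma.
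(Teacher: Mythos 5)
The paper states Proposition~\ref{prop:unique} without any proof, so your proposal is supplying an argument the authors omitted rather than competing with one. Your plan---track $j$ down the laminar family, observing that at each level it is either guessed, declared top, or confined to a unique subinterval of the next level---is the natural one, and the uniqueness half is essentially sound: distinct intervals of one level are disjoint while $j$'s feasible window always contains its $\OPT$ slot (the guessed jobs are pinned to their $\OPT$ positions, so $\OPT$ itself remains a witness), giving at most one $I$ per level with $j\in J_I$; and a job that is flexible at level $\ell$ can by definition go into two different level-$(\ell+1)$ subintervals under the same guess set that defines $J_{I'}$ at level $\ell+1$, so it lies in no such $J_{I'}$. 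Do make your ``resolution is final'' step explicit rather than appealing to the construction ``setting $j$ aside'': as written in Section~\ref{sec:guesses}, $J_{I''}$ at level $\ell+2$ is defined with respect to a strictly larger guess set, which could in principle re-confine a level-$\ell$ top job to a single level-$(\ell+2)$ interval; one needs either the convention that assigned jobs are removed from consideration, or a short additional argument. You are also right that the statement must be read as excluding guessed jobs. Note that only this disjointness half is ever used (in Lemma~\ref{lemma:shift}).

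The genuine gap is in your existence argument at the bottom of the recursion. Your case (c) is fine: a non-flexible job of $J_I$ has a nonempty window meeting exactly one level-$(\ell+1)$ subinterval $I'$, hence lies in $J_{I'}$. But your parenthetical claim that at a length-$1$ interval ``all of its still-unresolved jobs \emph{are} placed into $J_{top,I}^{(\ell)}$'' contradicts the definition: $J_{top,I}^{(\ell)}$ contains only the remaining \emph{flexible} jobs of $J_I$, and a length-$1$ interval has no level-$(\ell+1)$ subintervals, so no job in it is flexible and $J_{top,I}^{(\ell)}=\emptyset$. A job that is confined to a single subinterval at every level---say, one in the middle of a chain of length $T$---sinks all the way to a length-$1$ interval without ever entering any $J_{top}^{(\ell)}$ or $J_{guess}^{(\ell)}$, so the existence claim fails as literally stated; you flag this case at the end but then assert the opposite of what the definition yields. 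Repairing it requires amending the convention at the leaves (e.g., declaring every job of $J_I$ with $|I|=1$ a top job of that level); as it stands, your existence proof does not go through, although the disjointness statement the paper actually relies on does.
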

	
	\subsection{Few rejected jobs }
	
	With the preparation above, we will show that there are guesses of
	our algorithm for the guessed jobs and the intervals that lead to
	few discarded jobs overall, at most $O(\epsilon T)$ many. Since the
	algorithm selects the guesses that lead to the minimum total number
	of discarded jobs, we will show that it computes a solution with at
	most $O(\epsilon T)$ discarded jobs. %
	We need some preparation for this. First, we establish that we can
	afford to discard all jobs in sets $J_{top}^{(a+r\cdot m/\epsilon)}$
	for $r\in\N_{0}$, for some offset $a$.
	
	\begin{lemma}\label{lemma:shift} There is an offset $a\in\{0,1,...,\frac{m}{\epsilon}-1\}$
		such that $\left|\bigcup_{r\in\N_{0}}J_{top}^{(a+r\cdot m/\epsilon + 1 )}\right|\le\epsilon T$.
	\end{lemma}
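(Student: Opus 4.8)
The plan is a standard averaging (shifting) argument over the offset $a$. Every job $j \in J$ lies in exactly one set $J_{top}^{(\ell)}$ by Proposition~\ref{prop:unique}, so the sets $\{J_{top}^{(\ell)}\}_{\ell}$ partition $J$, and in particular $\sum_{\ell} |J_{top}^{(\ell)}| = n \le T$ (recall $n \le T$ after the dummy-job padding, and each job occupies its own slot). For a fixed offset $a \in \{0,1,\dots,m/\epsilon - 1\}$, consider the arithmetic progression of levels $\ell = a + r\cdot m/\epsilon + 1$ for $r \in \N_0$, and the associated quantity $S_a := \left|\bigcup_{r\in\N_0} J_{top}^{(a + r\cdot m/\epsilon + 1)}\right| = \sum_{r\in\N_0} |J_{top}^{(a + r\cdot m/\epsilon + 1)}|$, where the second equality is again by disjointness.

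Summing $S_a$ over all $m/\epsilon$ choices of $a$, each level $\ell \ge 1$ is counted exactly once (the map $\ell \mapsto (\ell - 1) \bmod (m/\epsilon)$ picks out the unique $a$ for which $\ell$ appears in the progression), so $\sum_{a=0}^{m/\epsilon - 1} S_a \le \sum_{\ell \ge 0} |J_{top}^{(\ell)}| = n \le T$. Therefore the average value of $S_a$ over the $m/\epsilon$ offsets is at most $\frac{T}{m/\epsilon} = \frac{\epsilon T}{m} \le \epsilon T$, and hence there exists an offset $a$ with $S_a \le \epsilon T$, which is exactly the claimed bound. (One in fact gets the stronger bound $\epsilon T / m$, but $\epsilon T$ suffices for the statement.)

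I do not expect a genuine obstacle here; the only things to be careful about are that the sets $J_{top}^{(\ell)}$ are indeed pairwise disjoint and exhaust $J$ (this is Proposition~\ref{prop:unique}), and that the total number of jobs is at most $T$ (ensured by the normalization in Section~2, where $T = \OPT \ge n$ since each of the $n$ unit jobs must occupy a distinct slot on the timeline and we pad $T$ up to a power of $2$). If one is worried about whether levels $\ell = 0$ or levels beyond $\rho$ contribute, note the progression starts at $\ell = a+1 \ge 1$ so level $0$ is harmless, and by Proposition~\ref{prop:unique} there are no top jobs at levels outside $\{0,\dots,\rho\}$, so the sums are finite and the counting above is valid.
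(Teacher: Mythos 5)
Your approach is the same as the paper's: use the disjointness from Proposition~\ref{prop:unique} to split the job set across the $m/\epsilon$ offset classes and average. However, one step is wrong: you claim $n\le T$ on the grounds that ``each of the $n$ unit jobs must occupy a distinct slot on the timeline.'' That is false on $m$ parallel machines --- up to $m$ jobs occupy the same time slot, so the correct bound (and the one the paper uses) is $n\le mT$. For example, with no precedence constraints and $n=mT$ jobs the optimal makespan is exactly $T$. With the corrected total, the average of $S_a$ over the $m/\epsilon$ offsets is $mT/(m/\epsilon)=\epsilon T$, which still yields the lemma by pigeonhole; but your parenthetical ``stronger bound $\epsilon T/m$'' is not actually obtained. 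So the argument survives after replacing $T$ by $mT$, and the rest of your bookkeeping (levels $\ge 1$ only, finiteness via Proposition~\ref{prop:unique}) is fine.
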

	
	\begin{proof} For every $a\in\{0,1,...,\frac{m}{\epsilon}-1\}$,
		we define $\calL_{a}=\{\ell:\ell=(a+r\cdot m/\epsilon + 1),r\in\N_{0}\}$
		and the set $\bigcup_{\ell\in\calL_{a}}J_{top}^{\ell}$. Now Proposition~\ref{prop:unique}
		implies that the resulting sets $\bigcup_{\ell\in\calL_{a}}J_{top}^{\ell}$
		are pairwise disjoint. Since $T$ is the optimal makespan, the total number
		of jobs cannot exceed $mT$. Hence, there exists some $a\in\{0,1,...,\frac{m}{\epsilon}-1\}$ such that
		$|\bigcup_{\ell\in\calL_{a}}J_{top}^{\ell}|\leq\eps T$. \end{proof}
	
	For the root problem of the recursion, we will show that the following
	guesses lead to few discarded jobs overall: the guessed subintervals
	are the subintervals of the laminar family at level $a+1$ where $a$
	which is the offset as identified by the above lemma. The guessed
	jobs are all jobs in $\bigcup_{\ell=0}^{a}J_{guess}^{(\ell)}$ and
	we guess their time slots in $\OPT$. We recurse on the guessed intervals
	of level $a+1$ of the laminar family. Suppose that in some level
	$r\in\N$ of the recursion we are given as input an interval $\bI$
	of some level $\ell_{r}=a+(r-1)\cdot m/\epsilon$ of the laminar family,
	together with a set of jobs of the form $\bar{J}=\bar{J}_{bottom}\dot{\cup}\bar{J}_{guess}$
	such that for each job $j\in\bar{J}_{guess}$ we are given a (guessed)
	time slot that equals the time slot in $\OPT$ during which $j$ is
	executed, but we are not given a time slot for any job in $\bar{J}_{bottom}$.
	We will show that the following guesses lead to few discarded jobs
	overall: we guess the intervals of level $\ell_{r+1}=$$a+r\cdot m/\epsilon+1$ 
	of the laminar family; the guessed jobs are all jobs in $\bar{J}_{bottom}\cap\bigcup_{\ell=a+(r-1)\cdot m/\epsilon + 1}^{a+r\cdot m/\epsilon}J_{guess}^{(\ell)}$
	that are scheduled in $\bI$ in $\OPT$ and we guess their time slots
	in $\OPT$. 
	
	With the next lemma, we prove inductively that there are few discarded
	jobs. We define $r_{\max}:=\left\lceil \epsilon(\log n/\log\log n+1)/m\right\rceil $
	which is an upper bound on the number of recursion levels that we
	need in this way.
	\begin{lemma}
		\label{lem:few-discard}Consider a recursive call of our algorithm
		in which the input is of the following form: 
		\begin{itemize}
			\item an interval $\bar{I}$ such that $\bar{I}=[0,T)$ (we define $r=0$
			in this case) or $\bar{I}$ is an interval of some level $\ell_{r}=a+(r-1)\cdot m/\epsilon$
			of the laminar family, for some $r\in\N,$
			\item a set of jobs $\bar{J}=\bar{J}_{bottom}\dot{\cup}\bar{J}_{guess}$
			such that 
			\begin{itemize}
				\item for each job $j\in\bar{J}_{guess}$ we are given a time slot that
				coincides with the time slot during which $j$ is scheduled in $\OPT$,
				\item each job $j\in\bar{J}_{bottom}$ is scheduled during $\bar{I}$ in
				$\OPT$.
			\end{itemize}
		\end{itemize}
		Then our algorithm returns a schedule for $\bar{J}$ in which at most
		\begin{equation}
		\sum_{\ell'\in\calL_{a}}\sum_{\bar{I}'\in\calL:\ell(\bar{I}')=\ell'\wedge\bar{I}'\subsetneq\bar{I}}\left|J_{top,\bar{I}'}^{(\ell')}\right|+\frac{5m\epsilon}{\log n}(r_{\max}-r)|\bar{I}|\label{eq:bound-discarded}
		\end{equation}
		jobs are discarded.
	\end{lemma}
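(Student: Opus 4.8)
I would prove \eqref{eq:bound-discarded} by induction on $r_{\max}-r$, i.e.\ from the deepest recursive calls up to the root. A bookkeeping step comes first: under the prescribed guesses, the call on $\bar I$ guesses exactly the jobs of $\bar J_{bottom}$ lying in $\bigcup_{\ell}J_{guess}^{(\ell)}$ over the block of consecutive laminar levels this call is responsible for, and guesses as its subintervals exactly the laminar subintervals of $\bar I$ at the level $\ell_{r+1}$ on which it recurses. From the definitions of $J_I$, $J_{guess}^{(\ell)}$ and $J_{top}^{(\ell)}$ one then checks that a job of $\bar J_{bottom}$ which the laminar construction confines to a single level-$\ell_{r+1}$ subinterval is either guessed here or handed to the corresponding recursive call, whereas every other job of $\bar J_{bottom}$ --- precisely those that remain flexible at some level of the block --- becomes a \emph{top job} of this call; hence the top jobs of the call are $\bigcup_{\ell}\bigcup_{\bar I''\in\calL:\,\ell(\bar I'')=\ell,\ \bar I''\subseteq\bar I}J_{top,\bar I''}^{(\ell)}$, with $\ell$ ranging over the block. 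Exactly one block level, call it $\ell^{\mathrm{sac}}$, lies in $\calL_a$, and for $r\ge1$ we have $\ell^{\mathrm{sac}}>\ell(\bar I)$, so the level-$\ell^{\mathrm{sac}}$ laminar intervals inside $\bar I$ are proper subintervals of $\bar I$; write $S:=\bigcup_{\bar I''\in\calL:\,\bar I''\subsetneq\bar I,\ \ell(\bar I'')=\ell^{\mathrm{sac}}}J_{top,\bar I''}^{(\ell^{\mathrm{sac}})}$. Finally, the input of the recursive call on a level-$\ell_{r+1}$ subinterval $\bar I_i$ again satisfies the lemma's hypotheses with $r$ replaced by $r+1$, so the induction hypothesis applies to it; all guesses are read off $\OPT$ and are hence mutually consistent.

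\textbf{Base case.} If $|\bar I|=1$ the algorithm neither partitions nor recurses; every job of $\bar J=\bar J_{bottom}\dot{\cup}\bar J_{guess}$ runs during the single slot $\bar I$ in $\OPT$, so $|\bar J|\le m$ and the algorithm places all of $\bar J$ in that slot, discarding nothing. The right-hand side of \eqref{eq:bound-discarded} is nonnegative, so the bound holds; for the prescribed guesses every leaf of the recursion is such a length-$1$ interval, reached at depth $\le r_{\max}$.

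\textbf{Inductive step.} Here $r<r_{\max}$. The jobs discarded in the call on $\bar I$ are those discarded inside the recursive calls on the $\bar I_i$, plus the top jobs of this call that the EDF step fails to place. Applying the induction hypothesis to each $\bar I_i$ and summing \eqref{eq:bound-discarded} over $i$: since $\sum_i|\bar I_i|=|\bar I|$, the error terms add up to $\frac{5m\epsilon}{\log n}(r_{\max}-r-1)|\bar I|$; and since the $\calL_a$-levels admitting a laminar interval properly inside some $\bar I_i$ are precisely the $\calL_a$-levels admitting a laminar interval properly inside $\bar I$ other than $\ell^{\mathrm{sac}}$, the first sums of the $\bar I_i$ add up to the first sum of \eqref{eq:bound-discarded} for $\bar I$ minus $|S|$. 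Hence it suffices to show the EDF step of this call discards at most $|S|+\frac{5m\epsilon}{\log n}|\bar I|$ top jobs; adding this back recombines the two contributions into exactly \eqref{eq:bound-discarded}. (For the root, $r=0$, one has $S=\emptyset$ and the first sum equals $\bigl|\bigcup_{\ell\in\calL_a}J_{top}^{(\ell)}\bigr|\le\epsilon T$ by Lemma~\ref{lemma:shift} together with the disjointness from Proposition~\ref{prop:unique}.)

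\textbf{The EDF bound --- the main obstacle.} To bound EDF's discards I would combine two facts. First, a window-containment claim: for every top job $j$ of this call, the slot $\sigma(j)$ that $\OPT$ uses for $j$ lies in the inflated window $[r_j-\lambda,\,d_j+\lambda)$, where $\lambda$ is the length of a subinterval $\bar I_i$. Indeed, every predecessor or successor of $j$ in $J_{guess}$ sits at its $\OPT$-slot exactly, while every non-discarded predecessor or successor of $j$ in $J_{bottom}$ is placed by the recursion in the same level-$\ell_{r+1}$ subinterval as in $\OPT$, hence within $\lambda$ of its $\OPT$-slot; so the imposed $r_j,d_j$ differ from the ``correct'' release/deadline for $j$ by at most $\lambda$, and since $\OPT$ obeys all precedence constraints, all top jobs of this call fit on $m$ machines inside their $\lambda$-inflated windows (this, together with monotonicity of $r_j,d_j$ along chains and the deadline ordering of EDF, also shows the schedule EDF returns is feasible). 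Second, an analysis of EDF: a job dropped by EDF at time $t$ certifies a maximal run of fully-busy slots ending at $t$ on which --- because every bottom/guessed predecessor of a top job is completed by its release date --- all processed jobs have release date $\ge$ the start of the run and deadline $\le t$; comparing against the $\OPT$-witness just described, which packs all those jobs into the $\lambda$-inflated run on $m$ machines, such a run absorbs at most $O(m\lambda)$ drops. The part I expect to be genuinely delicate is upgrading this ``$O(m\lambda)$ per busy run'' estimate into the global bound $O(\tfrac{m\epsilon}{\log n}|\bar I|)$: one must exploit that the window $[r_j,d_j)$ of a level-$\ell$ top job lies inside a single level-$\ell$ laminar interval with endpoints on the laminar grid, so that --- after setting aside the (widest-window) jobs of $S$ to be dropped outright --- the relevant busy runs fall into only $O(\log n/\epsilon)$ types and the inflation slack does not accumulate across the many length-$\lambda$ slots of $\bar I$. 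With $2^\rho=\Theta(\log n/\epsilon)$ and the constants tracked, this yields the claimed $\tfrac{5m\epsilon}{\log n}|\bar I|$. The first-step bookkeeping, the base case, and the telescoping in the inductive step are routine by comparison.
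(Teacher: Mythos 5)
Your skeleton matches the paper's: induction from $r_{\max}$ down, telescoping of the error terms over the subintervals $\tilde I_i$, peeling off the one block level in $\calL_a$ as the set you may discard wholesale, and the $\lambda$-inflated window containment (the paper's Lemma~\ref{lemma:like-OPT}). The bookkeeping and the telescoping are fine.

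The gap is in the EDF bound, and it is exactly where you flagged delicacy --- but the missing ingredient is not the one you point to. Your certification claim, that a job dropped at time $t$ witnesses a maximal fully-busy run during which every processed job was released after the run began and is due by $t$, fails here: this variant of EDF must also respect precedence constraints \emph{among the top jobs themselves}, so a machine can sit idle while jobs are pending (their top-job predecessors have not finished), and a busy slot can be occupied by a job with a much earlier release date or a much later deadline. Consequently the ``run absorbs at most $O(m\lambda)$ drops'' step does not go through as stated. The paper closes this hole with a separate idle-slot bound (Lemma~\ref{lemma:waste}): during any meta-interval in which some top job is always pending, the number of slots with an idle machine is at most the length of the longest precedence chain among the remaining top jobs, and that chain length is at most $\eps|\tilde I|/2^{\lceil\log\log n\rceil}$ \emph{only because} the construction in Section~\ref{sec:guesses} repeatedly guessed the per-subinterval endpoints of every long flexible chain until no long chain of flexible jobs survived. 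Your proposal never invokes this chain-length property, yet it is the entire reason the guessed-job sets $J_{guess}^{(\ell)}$ were defined the way they were, and without it the wasted space of EDF is uncontrolled. Two smaller omissions: degenerate jobs with $r_j=d_j$ are discarded regardless of any busy run and need their own count (the paper's Lemma~\ref{lem:deg}, via the observation that $\OPT$ packs them into $m\cdot 2\lambda$ slots around each grid point of level $\ell_{r+1}$); and the count of meta-intervals that can host a discard comes from the fact that a non-degenerate job of $\tilde J_{top,1}$ has $[r_j,d_j)$ straddling a boundary of the next-block level, which is sharper than your ``$O(\log n/\eps)$ types of busy runs'' and is what prevents the $2\lambda m$ slack from accumulating (Lemma~\ref{lem:non-deg}).
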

	Our goal is now to prove Lemma~\ref{lem:few-discard}. Consider a
	recursive call of our algorithm of the form specified in Lemma~\ref{lem:few-discard}
	for some $r\in\N$. If $r=r_{\max}$ then our algorithm simply
	enumerates over all possible schedules for $\bar{J}_{bottom}$ and
	thus finds a schedule in which no job is discarded (since this is
	the case in $\OPT$). Suppose by induction that the claim is true
	for all $r\ge r^{*}+1$ for some $r^{*}$. We want to prove that it
	is true also for $r=r^{*}$ so we consider such a recursive call.
	Let $\tilde{J}_{guess}$ denote the guessed jobs and let $\tilde{I}_{1},...,\tilde{I}_{k}$
	denote the guessed partition of $\bar{I}$ into subintervals, according
	to our description right before Lemma~\ref{lem:few-discard}. Let
	$\tilde{J}_{top}\subseteq\bar{J}$ and $\tilde{J}_{bottom}\subseteq\bar{J}$
	denote the resulting set of top and bottom jobs, respectively (thus,
	the sets $\bar{J},\bar{J}_{guess},\bar{J}_{bottom}$ are part of the
	input, while the sets $\tilde{J}_{guess},\tilde{J}_{top},\tilde{J}_{bottom}$
	and intervals $\tilde{I}_{1},...,\tilde{I}_{k}$ stem from our guesses).
	Recall that for each job $j\in\tilde{J}_{top}$ we define a release
	time $r_{j}$ and a deadline $d_{j}$ in our algorithm. Let $\lambda$
	denote the length of each interval $\tilde{I}_{i}$ (note that they
	all have the same length), i.e., the length of the intervals of level
	$\ell_{r+1}=$$a+r\cdot m/\epsilon+1$ (where $r=0$ in the root problem
	of the recursion). Note that $\lambda \leq |\bI| / (\log n/\eps)^{m/\eps+1}$. 
	We show in the next lemma that in $\OPT$ each
	job $j\in\tilde{J}_{top}$ is essentially scheduled during $[r_{j},d_{j})$
	and thus $r_{j}$ and $d_{j}$ are almost consistent with $\OPT$.
	
	\begin{lemma} \label{lemma:like-OPT} For each job $j\in\tilde{J}_{top}$
		it holds that in $\OPT$ the job $j$ is scheduled during $[r_{j}-\lambda,d_{j}+\lambda)$.
	\end{lemma}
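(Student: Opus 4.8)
The plan is to prove the two one-sided bounds separately: that in $\OPT$ the job $j$ starts at a time no smaller than $r_j-\lambda$, and that it finishes at a time no larger than $d_j+\lambda$; since $j$ occupies a single unit slot, this gives the claim. Both bounds rest on one structural observation about $\OPT$. When the algorithm processes the top jobs of this recursive call, the positions of the jobs in $\bar{J}_{guess}\cup\tilde{J}_{guess}\cup\tilde{J}_{bottom}$ are already determined, and in $\OPT$ every such job is placed consistently with that position: a guessed job occupies in $\OPT$ exactly its guessed slot (by the hypothesis of Lemma~\ref{lem:few-discard} and our choice of guesses), and a bottom job $j'\in\tilde{J}_{bottom}$, which the precedence constraints together with the guessed slots force into a single subinterval $\tilde{I}_{i'}$, must also be scheduled inside $\tilde{I}_{i'}$ in $\OPT$, because $\OPT$ is feasible and agrees with all guessed slots. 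It will also be convenient to record that $\lambda\ge 1$ and that both $r_j$ and $d_j$ are integer multiples of $\lambda$ (the intervals $\tilde{I}_1,\dots,\tilde{I}_k$ partition $\bar{I}$ into equal pieces of length $\lambda$, and $\bar{I}$ begins and ends at multiples of its own length), so off-by-one effects at subinterval boundaries are harmless. Finally, recall that $j\in\tilde{J}_{top}$, not being among the jobs with an already-fixed position, lies in $\bar{J}_{bottom}$ and is therefore scheduled inside $\bar{I}$ in $\OPT$ by the hypothesis of Lemma~\ref{lem:few-discard}.

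For the release-date bound, recall that $r_j$ is the left endpoint of the earliest subinterval $\tilde{I}_{i^{\star}}$ such that, in the schedule built so far, every predecessor of $j$ in $\bar{J}_{guess}\cup\tilde{J}_{guess}\cup\tilde{J}_{bottom}$ has completed by the left endpoint of $\tilde{I}_{i^{\star}}$ (and $r_j$ is the left endpoint of $\bar{I}$, or $0$, if there is no such predecessor). If $i^{\star}=1$, or there is no such predecessor, then $r_j-\lambda$ is at most the left endpoint of $\bar{I}$, which is at most the start time of $j$ in $\OPT$, and we are done. Otherwise, minimality of $i^{\star}$ forces $\tilde{I}_{i^{\star}-1}$ to violate the defining condition, so some predecessor $j'$ of $j$ in $\bar{J}_{guess}\cup\tilde{J}_{guess}\cup\tilde{J}_{bottom}$ completes, in the algorithm's schedule, strictly after the left endpoint of $\tilde{I}_{i^{\star}-1}$ while still completing by the left endpoint of $\tilde{I}_{i^{\star}}$; hence the slot of $j'$ lies in $\tilde{I}_{i^{\star}-1}$ (and if $j'$ is a bottom job it is simply assigned to that subinterval). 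By the structural observation, in $\OPT$ the job $j'$ lies in $\tilde{I}_{i^{\star}-1}$ as well, so it finishes in $\OPT$ strictly after the left endpoint of $\tilde{I}_{i^{\star}-1}$, which equals $r_j-\lambda$. Since $j'\prec j$, the job $j$ starts in $\OPT$ at a time at least $r_j-\lambda$.

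The deadline bound is the mirror image, using successors of $j$ and the \emph{largest} index $i_{\star}$ for which $d_j$ is the right endpoint of $\tilde{I}_{i_{\star}}$: if $i_{\star}=k$, or $j$ has no relevant successor, then $d_j+\lambda$ is at least the right endpoint of $\bar{I}$, which dominates the finish time of $j$ in $\OPT$; otherwise maximality of $i_{\star}$ produces a successor $j''$ of $j$ in $\bar{J}_{guess}\cup\tilde{J}_{guess}\cup\tilde{J}_{bottom}$ that, in the algorithm's schedule and hence (by the structural observation) also in $\OPT$, starts before the right endpoint of $\tilde{I}_{i_{\star}+1}$, which equals $d_j+\lambda$, while starting no earlier than $d_j$, so that $j$, being a predecessor of $j''$, finishes in $\OPT$ before $d_j+\lambda$. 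Combining the two bounds, the unit slot occupied by $j$ in $\OPT$ is contained in $[r_j-\lambda,d_j+\lambda)$.

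The argument is essentially bookkeeping; the one point that needs care is to pin down precisely what ``has completed by the left endpoint of $\tilde{I}_i$'' and ``has already started at the right endpoint of $\tilde{I}_i$'' mean for the two kinds of fixed jobs, and to check that a single $\lambda$ of slack (rather than $2\lambda$) is enough --- which works because $r_j$ and $d_j$ are multiples of $\lambda$, because $\lambda\ge 1$, and because a predecessor that completes strictly after the left endpoint of $\tilde{I}_{i^{\star}-1}$ in fact completes at least one unit later, still strictly after $r_j-\lambda$. I do not expect a genuine obstacle here: the whole mathematical content is the observation that in $\OPT$ the guessed jobs sit in their guessed slots and the bottom jobs sit in their forced subintervals.
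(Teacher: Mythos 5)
Your proof is correct and follows essentially the same route as the paper: identify the subinterval of length $\lambda$ immediately preceding $r_j$ (resp.\ following $d_j$), use minimality (resp.\ maximality) to extract a predecessor (resp.\ successor) whose slot lies in that subinterval, and transfer its position to $\OPT$ via the fact that guessed jobs sit in their $\OPT$ slots and bottom jobs are forced into the same subinterval in $\OPT$. The extra bookkeeping about $r_j,d_j$ being multiples of $\lambda$ is harmless but not needed beyond what the paper already does.
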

	
	\begin{proof} Let us recall that for each job $j\in\bar{J}_{top}$
		the release time $r_{j}$ is defined to be the earliest start time
		of an interval $\tilde{I}_{i},i=1,2,3,\cdots k$ such that every job $j'\in\bar{J}_{guess}\cup\bar{J}_{bottom}$
		with $j'\prec j$ is completed before $r_{j}$. We want to prove in
		$\OPT$ the job $j$ does not start before time $r_{j}-\lambda$.
		Let $\hI=[t_{1},t_{2}]$ denote the interval of level $\ell_{r+1}=$$a+r\cdot m/\epsilon+1$
		for which $t_{2}=r_{j}$ (and observe that $t_{2}=t_{1}+\lambda$).
		By definition of $r_{j}$, there exists a job $j'\in\bar{J}_{guess}\cup\tilde{J}_{guess}\cup\tilde{J}_{bottom}$
		with $j'\prec j$ that completes in $\hI$ in the schedule that we
		obtained from the recursive call in $\hI$. Since we assumed that
		our guessed time slots for the jobs in $\bar{J}_{guess}\cup\tilde{J}_{guess}$
		are identical to the corresponding time slots in $\OPT$, we conclude
		that also in $\OPT$ the job $j'$ is scheduled during $\hI$. Thus,
		in $\OPT$ the job $j$ cannot start before time $t_{1}=r_{j}-\lambda$.
		An analogous argument shows that $j$ cannot be scheduled in $\OPT$
		after $d_{j}+\lambda$.
	\end{proof}
	
	We want to show now that our variant of EDF discards only few jobs
	from $\tilde{J}_{top}$. To this end, we partition $\tilde{J}_{top}$
	into the sets $\tilde{J}_{top,1}:=\tilde{J}_{top}\cap \bigcup_{\ell=a+(r-1)\cdot m/\epsilon+1}^{a+r\cdot m/\epsilon}J_{top}^{(\ell)}$
	and $\tilde{J}_{top,2}:=\tilde{J}_{top}\cap J_{top}^{(a+r\cdot m/\epsilon + 1)}$.
	We can afford to discard all jobs in $\tilde{J}_{top,2}$, see~\eqref{eq:bound-discarded},
	but we need to bound the number of discarded jobs in $\tilde{J}_{top,1}$.
	For a job $j\in\tilde{J}_{top}$ it can happen that $r_{j}=d_{j}$.
	In this case we say that $j$ is \emph{degenerate}. Note that a degenerate
	job is always discarded. 
	\begin{lemma}
		\label{lem:deg}There are at most $2m\eps |\bI|/\log n$ degenerate jobs in $\tilde{J}_{top,1}$.
	\end{lemma}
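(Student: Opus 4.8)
The plan is to first pin down, using only $r_{j}=d_{j}$, where in $\OPT$ a degenerate job of $\tilde{J}_{top,1}$ can sit, and then to count such jobs level by level via the no‑long‑chain property that Section~\ref{sec:guesses} built into the sets $J_{top}^{(\ell)}$.

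\textit{Step 1 (locating a degenerate job).} Let $j\in\tilde{J}_{top,1}$ be degenerate and set $b:=r_{j}=d_{j}$. Since $r_{j}$ is the start of some guessed interval and $d_{j}$ the end of one, $b$ is the common endpoint of two consecutive guessed intervals $\tilde{I}_{i}=[b-\lambda,b)$ and $\tilde{I}_{i+1}=[b,b+\lambda)$. Were all predecessors of $j$ in $\bar{J}_{guess}\cup\tilde{J}_{guess}\cup\tilde{J}_{bottom}$ finished by time $b-\lambda$, then $r_{j}\le b-\lambda$; hence some such predecessor $p$ finishes in $(b-\lambda,b]$, i.e.\ is scheduled (in the solutions returned by the recursive calls, which coincide with $\OPT$ on these jobs) in a slot inside $[b-\lambda,b)$. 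Symmetrically, $d_{j}=b$ gives a successor $s$ of $j$ in the same set scheduled in a slot inside $[b,b+\lambda)$. Since $p\prec j\prec s$ and in $\OPT$ the job $p$ finishes after $b-\lambda$ while $s$ starts before $b+\lambda$, the $\OPT$‑slot of $j$ lies strictly between them; combined with Lemma~\ref{lemma:like-OPT} this forces $j$ to be scheduled in $\OPT$ inside $\tilde{I}_{i}\cup\tilde{I}_{i+1}=[b-\lambda,b+\lambda)$.

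\textit{Step 2 (the pinning jobs are ``deep'').} By Proposition~\ref{prop:unique} and the definition of $\tilde{J}_{top,1}$, we have $j\in J_{top}^{(\ell)}$ for a unique level $\ell$ with $a+(r-1)m/\eps+1\le\ell\le a+rm/\eps$, and in the construction of that level $j$ is \emph{flexible} inside the level‑$\ell$ interval $I$ that contains it in $\OPT$. Flexibility says that, once the time slots of $\bigcup_{\ell'\le\ell}J_{guess}^{(\ell')}$ are fixed as in $\OPT$ (these are all jobs we have already guessed correctly), $j$ can still be placed in at least two consecutive level‑$(\ell+1)$ subintervals of $I$, so the ``window'' that $\bigcup_{\ell'\le\ell}J_{guess}^{(\ell')}$ leaves for $j$ has length at least $2$ such subintervals, hence at least $2\lambda$. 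Comparing with the empty window $[r_{j},d_{j})$ produced by $\bar{J}_{guess}\cup\tilde{J}_{guess}\cup\tilde{J}_{bottom}\supseteq\bigcup_{\ell'\le\ell}J_{guess}^{(\ell')}$, at least one of the two jobs $p,s$ of Step~1 must fail to lie in $\bigcup_{\ell'\le\ell}J_{guess}^{(\ell')}$; hence it lies in $\tilde{J}_{bottom}\cup\bigcup_{\ell'>\ell}J_{guess}^{(\ell')}$ and is therefore confined, in $\OPT$, to a laminar interval strictly finer than level $\ell$ — and, being within distance $2\lambda$ of $j$, to such an interval contained in $I$ and adjacent to the level‑$(\ell+1)$ subinterval of $I$ in which $j$ sits.

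\textit{Step 3 (counting).} Fix $\ell$ as above and a level‑$\ell$ interval $I\subseteq\bI$, and consider the degenerate jobs of $J_{top}^{(\ell)}\cap\tilde{J}_{top,1}$ scheduled in $\OPT$ inside $I$. I would argue that these can be covered by $O(m)$ chains: jobs whose $\OPT$‑slots are within $O(\lambda)$ of each other are already few (at most $2m\lambda$ per common boundary, by Step~1), and for two such jobs $j_{1}$ (scheduled first) and $j_{2}$ that are far apart, the deep pinning job attached to $j_{1}$ on its right and the deep pinning job attached to $j_{2}$ on its left lie (in $\OPT$) strictly between $j_1$ and $j_2$ and are confined to short laminar intervals, from which — together with their comparabilities to $j_{1}$, resp.\ $j_{2}$ — one extracts a precedence relation $j_{1}\prec j_{2}$. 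Since, by construction, $J_{top,I}^{(\ell)}$ contains no chain of length $\ge\eps|I|/(m\,2^{\lceil\log\log n\rceil})$, it contains only $O\!\bigl(\eps|I|/2^{\lceil\log\log n\rceil}\bigr)$ degenerate members of $\tilde{J}_{top,1}$ with $\OPT$‑slot in $I$. Summing over the level‑$\ell$ intervals $I\subseteq\bI$ (whose lengths sum to $|\bI|$) and over the relevant levels $\ell$, and using $2^{\lceil\log\log n\rceil}\ge\log n$, bounds the number of degenerate jobs in $\tilde{J}_{top,1}$ by $O(m|\bI|/\log n)$; the sharper bound $2m\eps|\bI|/\log n$ follows provided only boundedly many levels $\ell$ actually contribute such degenerate jobs, or provided the chain‑removal threshold carries the extra factor $\eps$ that the (slightly inconsistent) statements in Section~\ref{sec:guesses} seem to intend.

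The main obstacle is precisely the chain (or $O(m)$‑chain) covering in Step~3: one must argue that two incomparable degenerate top jobs of the same level in the same interval cannot both be squeezed onto their respective boundaries by deep jobs — upgrading Step~2's ``at least one of $p,s$ is deep'' to whatever is actually needed — and then keep the constants tight enough that the double sum lands at $2m\eps|\bI|/\log n$ rather than merely $O(m|\bI|/\log n)$. The remaining boundary cases are routine: the window $\tilde{I}_{i}\cup\tilde{I}_{i+1}$ spilling out of $I$ or of $\bI$, and levels at which the laminar family has already been subdivided into unit intervals.
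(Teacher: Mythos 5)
Your Step 1 is exactly how the paper's proof begins: a degenerate job $j$ has $r_j=d_j=b$ for a boundary point $b$ of the guessed partition, and Lemma~\ref{lemma:like-OPT} places $j$ in $\OPT$ inside $[b-\lambda,b+\lambda)$, so each such boundary point can account for at most $m\cdot 2\lambda$ degenerate jobs by machine capacity --- an observation you even record parenthetically in Step~3. At that point, however, you abandon the direct count and launch into the ``deep pinning jobs'' and chain-covering machinery of Steps 2--3, which you yourself flag as incomplete (the claimed implication $j_1\prec j_2$ for far-apart degenerate jobs is never established) and which, even if it worked, you only claim yields $O(m|\bI|/\log n)$ rather than the stated $2m\eps|\bI|/\log n$.

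The paper finishes in one more line, and the idea you are missing is a bound on the number of distinct degenerate points $b$. A job $j\in\tilde{J}_{top,1}$ lies in $J_{top}^{(\ell)}$ for some $\ell\le a+r\cdot m/\eps$ and is flexible at that level, so (as the paper argues) the window that the guessed jobs leave for $j$ straddles a boundary between two adjacent intervals of level $a+r\cdot m/\eps$ of $\calL$; hence $b$ must be such a boundary, and there are at most $(\log n/\eps)^{m/\eps}$ intervals of that level inside $\bI$. Multiplying the per-point capacity $2m\lambda$ by this count and using $\lambda\le|\bI|/(\log n/\eps)^{m/\eps+1}$ gives $2m\lambda\cdot(\log n/\eps)^{m/\eps}\le 2m\eps|\bI|/\log n$ directly. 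No chain argument, no comparability between distinct degenerate jobs, and no appeal to the no-long-chain property of Section~\ref{sec:guesses} is needed here (that property is what drives Lemma~\ref{lemma:waste}, not Lemma~\ref{lem:deg}). So the gap is concrete: you never bound how many boundary points can carry degenerate jobs --- which is the whole content of the lemma beyond your Step~1 --- and the substitute route you sketch is both unproven and unnecessary.
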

	
	\begin{proof}
		Consider any job $j\in \tilde{J}_{top,1}$. By definition, there exist two adjacent intervals $I'=[t'_1, t'_2), I'' = [t'_2, t'_3)$ such that $\ell(I')=\ell(I'') = a+r\cdot m/\eps$ such that $j$ can be potentially scheduled during both the intervals as dictated by the guessed jobs $\bar{J}_{guess}$. Thus, if $j$ is degenerate, then $r_j = d_j = t'_2$. 
		Further, by Lemma~\ref{lemma:like-OPT}, all jobs $j\in \tilde{J}_{top,1}, r_j=d_j=t'_2$ must be scheduled in $\OPT$ in the interval $[t'_2-\lambda, t'_2+\lambda]$. Hence the total number of such jobs is upper bounded by
		
		\begin{align*}
		& m\cdot 2\lambda|\{I' : I'\in \calL \wedge \ell(I') =a +r\cdot m/\eps  \wedge I'\subset \bI \}| \\
		& \le 2m\cdot (\log n/\eps)^{m/\eps}\cdot \frac{|\bar{I}|}{(\log n/\eps)^{m/\eps +1}} \\
		& = 2m\eps|\bI|/\log n 
		\end{align*}

	\end{proof}

	Our goal now is to bound the number of discarded non-degenerate jobs
	in $\tilde{J}_{top,1}$. We partition $\bar{I}$ into \emph{meta-intervals}
	$\hat{I}_{1},\hat{I}_{2},...\hat{I}_{k'}$ with $k'\le k$ with the
	properties that each interval $\hat{I}_{i}\in\{\hat{I}_{1},\hat{I}_{2},...\}$
	is of the form $\hat{I}_{i}=[t_{1},t_{2})$ for some $t_{1},t_{2}\in\N$
	such that 
	\begin{itemize}
		\item each value $t_{1},t_{2}$ is the start or the end point of some interval
		in $\tilde{I}_{1},...,\tilde{I}_{k}$, 
		\item at time $t_{2}$ there is no (non-degenerate) job $j\in\tilde{J}_{top}$
		pending that was released before $t_{2}$,
		\item for each $t\in[t_{1},t_{2})$ such that $t$ is the start or end point
		of some interval in $\tilde{I}_{1},...,\tilde{I}_{k}$, some non-degenerate
		job $j\in\bar{J}_{top}$ is pending at time $t$. 
	\end{itemize}
	Now the intuition is that during each meta-interval $\hat{I}_{i}=[t_{1},t_{2})$
	EDF tries to schedule only jobs that $\OPT$ schedules during $[t_{1}-\lambda,t_{2}+\lambda)$
	(due to Lemma~\ref{lemma:like-OPT}), so essentially we have enough
	space on our machines to schedule all these jobs. We might waste space
	due to the precedence constraints. However, this space is bounded
	via the following lemma. 
	
	\begin{lemma} \label{lemma:waste}Let $\hat{I}_{i}$ be a meta-interval
		such that at each time $t\in\hat{I}_{i}$ some job $j\in\tilde{J}_{top}$
		is pending. Then during $\hat{I}_{i}$ there are at most $\left|\hat{I}_{i}\right|\frac{\epsilon}{m\log n}$
		time slots $[t,t+1)$ with $t\in\N$ such that some machine is idle
		during $[t,t+1)$. \end{lemma}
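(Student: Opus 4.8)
The plan is to exploit the defining property of a meta-interval $\hat I_i=[t_1,t_2)$: at every time $t\in\hat I_i$ some job of $\tilde J_{top}$ is pending. A machine can only be idle during $[t,t+1)$ if all pending jobs at time $t$ are \emph{blocked}, i.e.\ each of them has an unscheduled predecessor in $\tilde J_{guess}\cup\tilde J_{bottom}$ (predecessors inside $\tilde J_{top}$ cannot block, since EDF would have scheduled a minimal pending top job first). So I would first argue: if some machine is idle during $[t,t+1)$, then every job $j\in\tilde J_{top}$ that is pending at time $t$ (and there is at least one, by the meta-interval property) has a predecessor $j'\in\tilde J_{guess}\cup\tilde J_{bottom}$ that is scheduled at some time $\ge t$. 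Fix one such pending job $j$ and the corresponding predecessor $j'$; since $j'\prec j$ and (by transitivity and the EDF rule) $j$ is released no earlier than the completion time of $j'$, the release date $r_j$ — being the end of an interval of level $\ell_{r+1}$ after which all $\bar J_{guess}\cup\bar J_{bottom}$-predecessors of $j$ finish — must lie at or after the end of the level-$\ell_{r+1}$ interval containing $j'$. Combined with Lemma~\ref{lemma:like-OPT}, $j'$ is scheduled in $\OPT$ within $\lambda$ of that interval, so $j'$ is a job of $\tilde J_{guess}\cup\tilde J_{bottom}$ that is ``responsible'' for the idleness at time $t$ and is located (in $\OPT$, hence in our guessed schedule) within the level-$\ell_{r+1}$ interval whose right endpoint is $\le t$.

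The key combinatorial step is then a charging argument. I would charge each idle time slot $[t,t+1)$ in $\hat I_i$ to the level-$\ell_{r+1}$ subinterval $\tilde I$ immediately to its left (the one ending at or before $t$ that contains the blocking predecessor $j'$). The point is that a single level-$\ell_{r+1}$ interval $\tilde I=[a,a+\lambda)$ can be charged by at most $\lambda$ many idle slots: once EDF passes the right endpoint $a+\lambda$ of $\tilde I$, any job whose only blocking predecessors lie in $\tilde I$ has been released and all such predecessors have been scheduled (they are in $\tilde J_{guess}\cup\tilde J_{bottom}$, scheduled in their guessed $\OPT$ slots inside $\tilde I$, or recursed on and feasibly placed inside $\tilde I$), so that job is no longer blocked by $\tilde I$; hence only while EDF sweeps the $\lambda$ time slots of $\tilde I$ itself (or rather, a bounded window around it) can $\tilde I$ be the responsible blocker. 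Making this precise — that each level-$\ell_{r+1}$ interval is charged at most $O(\lambda)$ times — is where I expect to spend the most care, because I must rule out a job being blocked across many such intervals in sequence; the chain structure from Section~\ref{sec:guesses} (we guessed the first and last job of every long flexible chain, so no flexible top job has a long chain of predecessors stretching across many level-$\ell_{r+1}$ intervals) is what prevents this.

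Finally I would count. The number of level-$\ell_{r+1}$ intervals contained in $\hat I_i$ is at most $|\hat I_i|/\lambda$, and each contributes at most $O(\lambda)$ idle slots, so a naive bound is $O(|\hat I_i|)$ — too weak. To get the claimed $|\hat I_i|\frac{\epsilon}{m\log n}$ I instead use that the blocking predecessors $j'$ all come from $\tilde J_{guess}\cup\tilde J_{bottom}$ and, more sharply, from the chains that were \emph{not} pushed down because they were short: each long flexible chain had its endpoints guessed, so a pending top job $j$ can be blocked only by predecessors within a window of length $O(\eps|\bar I|/2^{\lceil\log\log n\rceil})$ — i.e.\ $O(\lambda)$ — worth of chain, and the total number of distinct level-$\ell_{r+1}$ intervals that can ever serve as a blocker for jobs pending in $\hat I_i$ is only $O(|\hat I_i|\cdot\frac{\epsilon}{m\lambda\log n})$. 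Plugging $\lambda\le|\bar I|/(\log n/\eps)^{m/\eps+1}$ and each such interval costing $\lambda$ idle slots gives the bound $|\hat I_i|\frac{\epsilon}{m\log n}$. The main obstacle, as noted, is the charging step: showing that after EDF leaves a level-$\ell_{r+1}$ interval, that interval stops being the cause of idleness — this requires combining the EDF release-date definition, Lemma~\ref{lemma:like-OPT}, and the chain-guessing invariant, and it is the technical heart of the lemma.
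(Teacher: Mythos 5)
Your proposal misidentifies the source of the idle time, and the charging argument built on that misidentification does not go through. You assert that a pending top job can only be blocked by an unscheduled predecessor in $\tilde J_{guess}\cup\tilde J_{bottom}$, and that ``predecessors inside $\tilde J_{top}$ cannot block.'' This is backwards. By the definition of the release date, every predecessor of $j$ lying in $\bar J_{guess}\cup\bar J_{bottom}$ has already completed by time $r_j$ (that is exactly what $r_j$ encodes), so for $t\ge r_j$ these jobs never block $j$. What blocks a released-but-unavailable top job at time $t$ is a predecessor that is itself a \emph{top} job and has not yet finished -- for instance one that is running during $[t,t+1)$ or is itself still pending. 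This is precisely the Graham/List-Scheduling phenomenon: fewer than $m$ jobs may be available at time $t$ even though many are pending, because the pending ones sit behind a precedence chain of top jobs. Your charging scheme therefore charges idle slots to objects (level-$\ell_{r+1}$ subintervals containing guessed/bottom predecessors) that cannot be responsible for the idleness; moreover, the crucial step -- that each such subinterval is charged only $O(\lambda)$ times and that only few subintervals are ever chargeable -- is exactly the part you leave unproven, and your own accounting concedes a naive bound of $O(|\hat I_i|)$ before an unjustified refinement.

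The paper's argument is the classical ``idle time is at most the chain length'' induction. It partitions the relevant top jobs into layers $J_0,J_1,\dots,J_\eta$ by precedence depth ($J_0$ consists of jobs whose predecessors are all done before $t_1$; $J_p$ of jobs whose predecessors all lie in earlier layers, at least one in $J_{p-1}$), and shows by induction over the idle slots $[\tau_1,\tau_1+1),[\tau_2,\tau_2+1),\dots$ that after the $q$-th idle slot every job still pending lies in a layer of index greater than $q$: if a job of layer at most $q$ were still pending, all its predecessors would already be finished and EDF would have placed it on the idle machine. Hence the number of idle slots is at most $\eta$, i.e.\ at most the longest precedence chain among these top jobs, which is at most $\eps|\tilde I|/2^{\lceil\log\log n\rceil}\le |\hat I_i|\,\eps/\log n$ because the construction of Section~\ref{sec:guesses} guessed the per-subinterval endpoints of every long flexible chain, leaving only short chains among the top jobs. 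That chain-shortening invariant is the ingredient you correctly sensed was needed, but it enters the proof by bounding the number of precedence \emph{layers} among top jobs, not by limiting where a blocking guessed or bottom predecessor can sit.
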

	\begin{proof}
		For the meta-interval $\hI_i = [ t_1, t_2 )$, let $J_{\hI_i}$ denote the subset of 
		jobs in $j\in \bar{J}_{top,1}$ such that $r_j \leq t_1$. We now create a partition of $J_{\hI_i}$ according to the precedence constraints. Let $J_0 \subseteq J_{\hI_i}$ denote the jobs whose preceding jobs have been either scheduled or discarded before $t_1$. For every $p = 1,2,\cdots \eta$ (where $\eta$ is some positive integer), let $J_p$ denote the set of jobs $j\in J_{\hI_i}$ for which there exists a job $j'\in J_{p-1}$ such that $j' \prec j$ and there is no job $j''\in J_{\hI_i}\setminus J_{p-1}$ such that $j'' \prec j$.
		
		Let $\tau_1, \tau_2, \tau_3, \cdots , \tau_{\eta'}$
		be defined such that the time slots $[\tau_1, \tau_{1}+1), [\tau_2, \tau_{2}+1),...,[\tau_{\eta'}, \tau_{\eta'}+1) $
		are exactly the time slots during 
		$\hI_i$ such that some machine is idle for the entire respective time slot. We claim that any job $j'\in \tilde{J}_{top}$ that is pending at the end of a time slot
		$[\tau_q, \tau_{q}+1)$ with
		$q=\{1,2,3,\cdots \}$ must belong to $J_p$ for some $p > q$. We prove this by induction on $q$. For the base case, since no jobs are released inside $\hI_i$, no job in $J_0$ is pending at time $\tau_0 +1$ since otherwise this would contradict the definition of our variant of EDF and the fact that a machine is idle during $[\tau_1, \tau_{1}+1)$. Now assume the hypothesis to be true for the time slots
		$[\tau_1, \tau_{1}+1), [\tau_2, \tau_{2}+1),...,[\tau_q, \tau_{q}+1)$
		for some $q$ and consider the time slot $[\tau_{q+1}, \tau_{q+1}+1)$.
		If a job $j'\in \tilde{J}_{top}$ is pending at time $\tau_q +1$ then $j' \in J_{p'}$ for some $p' > q$ by the induction hypothesis. This means that all jobs in $\bigcup_{p=0}^{q} J_p$ have completed before time $\tau_q +1$. Hence, the jobs in $J_{q+1}$ can be scheduled at any time after time $\tau_q +1$.
		Suppose that there is a pending job $j$ at time $\tau_{q+1} +1$. Since a machine is idle during $[\tau_{q+1}, \tau_{q+1}+1)$
		we have that $j\in J_{p'}$ for some $p' > q+1$ since otherwise our variant of EDF would have scheduled $j$ during $[\tau_{q+1}, \tau_{q+1}+1)$.
		%
		Now consider any interval $\tilde{I}$ contained within the meta-interval $\hat{I}$. By our construction of the guessed, top and bottom jobs in Section~\ref{sec:guesses} the length of the longest chain of the jobs in $\tilde{J}_{top}$ is
		at most
		$\eps|\tilde{I}|/2^{\left\lceil \log\log n\right\rceil }\le\left|\hat{I}_{i}\right|\frac{\epsilon}{\log n}$, since $|\hat{I}_i|$ is clearly an upper bound on $\tilde{I}$.
		Therefore, $\eta' \le \eta \le \left|\hat{I}_{i}\right|\frac{\epsilon}{\log n}$ which completes our proof.
		%
		%
	\end{proof}
	
	Also, we show that there are only few meta-intervals which---together
	with the argumentation above---yields the following lemma.
	
	\begin{lemma} \label{lem:non-deg} The total number of discarded
		non-degenerate jobs in $\tilde{J}_{top,1}$ is at most $m|\bI|\left(\frac{2\eps}{\log^2 n} + \frac{\eps}{\log n}\right)$.
	\end{lemma}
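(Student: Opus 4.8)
The plan is to charge every discarded non-degenerate job of $\tilde{J}_{top,1}$ to the meta-interval during which our EDF variant discards it, and then to bound, meta-interval by meta-interval, how many jobs are charged this way. First I would check that if a non-degenerate job $j\in\tilde{J}_{top,1}$ is discarded, the whole window $[r_j,d_j)$ lies inside a single meta-interval $\hat{I}_i=[t_1,t_2)$ on which every time slot is fully busy. Indeed, by the discard rule $j$ is rejected only at time $d_j$ and only if all $m$ machines are occupied throughout $[r_j,d_j)$; since $j$ is non-degenerate moreover $d_j\ge r_j+\lambda$. Such a window cannot cross the right endpoint $t_2$ of the meta-interval it starts in, since at $t_2$ the job $j$ would be a non-degenerate job released before $t_2$ that is still pending, contradicting the defining property of meta-intervals. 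Hence $[r_j,d_j)\subseteq\hat{I}_i$, and because $[r_j,d_j)$ is fully busy the hypothesis of Lemma~\ref{lemma:waste} holds for $\hat{I}_i$ ($j$ itself is pending at every time of that window).

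Next I would fix such a meta-interval $\hat{I}_i=[t_1,t_2)$ and let $D_i$ be the set of non-degenerate jobs of $\tilde{J}_{top,1}$ discarded inside it; by the previous step every $j\in D_i$ has $[r_j,d_j)\subseteq\hat{I}_i$. The meta-interval properties also guarantee that the $\tilde{J}_{top}$-jobs EDF handles inside $\hat{I}_i$ are exactly those released during $[t_1,t_2)$ (no such job released earlier is pending at $t_1$, and each one released in $[t_1,t_2)$ is scheduled or discarded before $t_2$), so $D_i$ is contained in this set. By Lemma~\ref{lemma:like-OPT}, $\OPT$ schedules each of these jobs in $[t_1-\lambda,t_2+\lambda)$, so their number is at most $m(|\hat{I}_i|+2\lambda)$ minus the volume $\OPT$ spends there on other jobs. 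On our side, during $\hat{I}_i$ EDF leaves at most $|\hat{I}_i|\tfrac{\epsilon}{m\log n}$ time slots with an idle machine (Lemma~\ref{lemma:waste}), i.e.\ at most $|\hat{I}_i|\tfrac{\epsilon}{\log n}$ idle machine-slots, the remaining machine-slots being used by $\tilde{J}_{top}$-jobs or by the schedules returned by the recursive calls on $\tilde{I}_1,\dots,\tilde{I}_k$. Using that those recursive calls place their jobs exactly as in $\OPT$, the volume they occupy inside $\hat{I}_i$ matches the volume $\OPT$ spends there on the same jobs, so those terms cancel when the two counts are compared; what survives is the $2m\lambda$ of boundary slack from Lemma~\ref{lemma:like-OPT} and the $|\hat{I}_i|\tfrac{\epsilon}{\log n}$ of internal idleness. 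After subtracting off the degenerate jobs of $\tilde{J}_{top,1}$ (handled in Lemma~\ref{lem:deg}) and the jobs of $\tilde{J}_{top,2}$ (accounted for separately in~\eqref{eq:bound-discarded}), this yields $|D_i|\le 2m\lambda+|\hat{I}_i|\tfrac{\epsilon}{\log n}$.

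Finally I would sum over all meta-intervals containing a discarded job. They are pairwise disjoint and contained in $\bI$, so $\sum_i|\hat{I}_i|\le|\bI|$ and the idleness terms contribute at most $|\bI|\tfrac{\epsilon}{\log n}\le m|\bI|\tfrac{\epsilon}{\log n}$. For the boundary terms, the missing ingredient is that there are only few such meta-intervals: I would argue that each of them is \emph{long}, because a discarded non-degenerate job $j\in\tilde{J}_{top,1}$ became a top job at some laminar level between $\ell_r+1$ and $\ell_r+m/\epsilon<\ell_{r+1}$, and being flexible at that level forces $[r_j,d_j)$ to span several laminar levels above $\ell_{r+1}$, so $|\hat{I}_i|$ is at least the length of a laminar interval a constant number of levels below $\ell_{r+1}$. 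Hence there are at most $(2^\rho)^{O(1)}$ of them in $\bI$, and since $\lambda=|\bI|/(2^\rho)^{m/\epsilon+1}$ with $2^\rho\ge\log n/\epsilon$, the total boundary contribution is at most $(2^\rho)^{O(1)}\cdot 2m\lambda\le m|\bI|\tfrac{2\epsilon}{\log^2 n}$ for $\epsilon$ small and $n$ large. Adding the two bounds gives the claimed $m|\bI|\big(\tfrac{2\epsilon}{\log^2 n}+\tfrac{\epsilon}{\log n}\big)$.

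The step I expect to be the main obstacle is the per-meta-interval volume bookkeeping: making fully rigorous that the only discrepancy between what EDF packs into $\hat{I}_i$ and what $\OPT$ packs into $[t_1-\lambda,t_2+\lambda)$ is the idle slots of Lemma~\ref{lemma:waste} plus the $2m\lambda$ of boundary slack. This requires a clean statement that the recursive subcalls reproduce $\OPT$ on their job sets (so that their contributions cancel on both sides of the count), together with the careful separation of the degenerate jobs and of $\tilde{J}_{top,2}$, which cannot be controlled by this argument and must be charged elsewhere. A secondary delicate point is the length lower bound on a meta-interval that is used to bound their number.
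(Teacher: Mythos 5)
Your overall strategy is the same as the paper's: decompose $\bI$ into meta-intervals, use Lemma~\ref{lemma:like-OPT} to bound the number of top jobs that EDF must handle in a meta-interval by $\OPT$'s capacity on the $\lambda$-enlarged window, use Lemma~\ref{lemma:waste} to bound the idle machine-slots, obtain the per-meta-interval bound $2m\lambda+\frac{\eps}{\log n}|\hat{I}_i|$, and sum over the (disjoint) meta-intervals. The volume bookkeeping you flag as the main obstacle is fine and is in fact left implicit in the paper as well: the machine-slots that the recursive subcalls occupy inside a meta-interval are at most the slots $\OPT$ devotes to the same jobs there (each bottom job is scheduled by $\OPT$ inside its assigned subinterval), so these terms cancel in the favorable direction.

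The genuine gap is your count of the meta-intervals that contribute a $2m\lambda$ boundary term. The claim that each such meta-interval is ``long'' is unjustified and false in general: a meta-interval containing the window $[r_j,d_j)$ of a discarded non-degenerate job is only guaranteed to have length at least $d_j-r_j$, which can be as small as $\lambda$; nothing forces it to be comparable to a coarse laminar interval. Moreover, a bound of ``$(2^\rho)^{O(1)}$ many'' is not what the arithmetic requires: to conclude $2m\lambda\cdot(\#\text{meta-intervals})\le 2m\eps|\bI|/\log^2 n$ you need the number of contributing meta-intervals to be at most $(\log n/\eps)^{m/\eps-1}$, i.e., a factor $(\log n/\eps)^{2}$ smaller than the number $(\log n/\eps)^{m/\eps+1}$ of intervals $\tilde{I}_1,\dots,\tilde{I}_k$, so the exponent matters. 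The paper obtains this count by a different (and correct) route: since a job $j\in\tilde{J}_{top,1}$ lies in $J_{top}^{(\ell)}$ for a level $\ell$ strictly coarser than $\ell_{r+1}$, its window $[r_j,d_j)$ has its endpoints in two \emph{different} intervals of a coarser level of which $\bI$ contains only $(\log n/\eps)^{m/\eps-1}$ members; hence any meta-interval containing a discarded job of $\tilde{J}_{top,1}$ intersects at least two such coarse intervals, and by disjointness of the meta-intervals at most $(\log n/\eps)^{m/\eps-1}$ of them can do so. Replacing your length argument with this boundary-straddling count closes the gap; the rest of your proof then matches the paper's.
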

	\begin{proof}
		Consider a meta-interval $\hat{I}_{i}$. Let $j$ be the last non-degenerate
		job in $\bar{J}_{top,1}$ that is discarded during $\hat{I}_{i}$.
		Let $t_{1}$ denote the beginning of the interval $\hat{I}_{i}$.
		We know that $d_{j}$ is the end point of some interval $\tilde{I}_{i'}$.
		Since $j$ is non-degenerate we know that $[r_{j},d_{j})\ne\emptyset$
		and by definition of $\hat{I}_{i}$ this implies that at each time
		$t\in[t_{1},d_{j})$ there is some job from $\bar{J}_{top}$ pending.
		Using Lemma~\ref{lemma:waste}, the total number of (wasted) idle slots across
		all machines during $[t_{1},d_{j})$ is at most $m\eps|[t_{1},d_{j})|/m\log n$.
		
		We further invoke Lemma~\ref{lemma:like-OPT} to argue that all jobs
		in $\tilde{J}_{top}$ that we schedule or discard during $[t_{1},d_{j})$
		are scheduled in $\OPT$ during $[t_{1}-\lambda,d_{j}+\lambda)$.
		The maximum number of jobs that $\OPT$ could have scheduled during
		$[t_{1}-\lambda,d_{j}+\lambda)$ is hence $m(d_{j}-t_{1}+2\lambda)$.
		Since our wasted space during $[t_{1},d_{j})$ is at most $m\eps|[t_{1},d_{j})|/m\log n$,
		we conclude that we discard at most 
		\[
		m(d_{j}-t_{1}+2\lambda)-m(d_{j}-t_{1})+\frac{\epsilon}{\log n}(d_{j}-t_{1})=2\lambda m+\frac{\epsilon}{\log n}(d_{j}-t_{1})
		\]
		jobs during $[t_{1},d_{j})$. By definition, for each job $j\in\bar{J}_{top,1}$
		we have that $j\in J_{top}^{(\ell)}$ for some level $\ell$ with
		$\ell\le a+(r+1)\cdot m/\epsilon-1$. Thus, for such a job $j$ we
		have that $r_{j}$ and $d_{j}$ lie in different intervals of level
		$a+(r+1)\cdot m/\epsilon$ of $\calL$. Thus, if during a meta-interval
		$\hat{I}_{i}$ a job $j\in\bar{J}_{top,1}$ is discarded, then $\hat{I}_{i}$
		has non-empty intersection with at least two different intervals of
		level $a+(r+1)\cdot m/\epsilon$ of $\calL$. Hence, there can be
		at most $(\log n/\eps)^{m/\epsilon-1}$ such meta-intervals. We conclude
		that in total we discard at most 
		\[
		(\log n/\eps)^{m/\epsilon-1}\cdot 2\lambda m+\frac{\epsilon}{\log n}|\bar{I}|\le m\frac{2|\bI|(\log n/\eps)^{m/\epsilon-1}}{(\log n/\eps)^{m/\eps +1}} +\frac{\epsilon}{\log n}|\bar{I}| \le m|\bI|\left(\frac{2\eps}{\log^2 n} + \frac{\eps}{\log n}\right)
		\]
		jobs in $\bar{J}_{top,1}$.%
	\end{proof}
	Now we are ready to bound the total number of discarded jobs using Lemmas~\ref{lem:deg},
	\ref{lem:non-deg}, and the induction hypothesis.
	\begin{proof}[Proof of Lemma~\ref{lem:few-discard}]
		We shall prove the lemma for the input interval $\bI$ at level $\ell_{r^{*}} = a + (r^{*}-1)\cdot m/\eps$.
		By induction hypothesis, suppose~\eqref{eq:bound-discarded} is true for all $r \ge r^{*} + 1$. Hence, applying our recursive algorithm with the input intervals $\tilde{I}_i, i=1,2,\cdots k$ returns a schedule where number of discarded jobs is at most
		
		\begin{equation}
		\label{eq:ind-hyp}
		\sum_{\ell'\in\calL_{a}}\sum_{\bar{I}'\in\calL:\ell(\bar{I}')=\ell'\wedge\bar{I}'\subsetneq\tilde{I}_i}\left|J_{top,\bar{I}'}^{(\ell')}\right|+\frac{5m\epsilon}{\log n}(r_{\max}-(r^{*} + 1)|{\tilde{I}_i}|
		\end{equation}
		
		Summing~\eqref{eq:ind-hyp} over all $i=1,2,\cdots k$ and observing that $\bI = \dot\bigcup_{i=1}^k \tilde{I}_i$ yields that the total number of discarded jobs is at most
		
		\begin{equation}
		\label{eq:ind-hyp-sum}
		\sum_{\ell'\in\calL_{a}}\sum_{\bar{I}'\in\calL:\ell(\bar{I}')=\ell'\wedge\bar{I}'\subsetneq\bar{I}}\left|J_{top,\bar{I}'}^{(\ell')}\right|+\frac{5m\epsilon}{\log n}(r_{\max}-(r^{*} + 1))|{\bar{I}}|
		\end{equation}

		We would now like to prove the statement for $r=r^{*}$. Now at the recursive call at level $r=r^{*}$ with the input subinterval $|\bI|$ consider the guesses as described in the preceding discussion. Using Lemmas~\ref{lem:non-deg} and~\ref{lem:deg}, the total number of jobs rejected from $\tilde{J}_{top,1}$ under these guesses is at most 
		
		\begin{equation}
		\label{eq:top1}
		2\eps m|\bI|/\log n + m|\bI|\left(\frac{2\eps}{\log^2 n} + \frac{\eps}{\log n}\right) \leq 5m\eps|\bI|/\log n 
		\end{equation}
		
		Further, we could have potentially rejected all the jobs in $\tilde{J}_{top,2}$. The total number of such jobs is 
		
		\begin{equation}
		\label{eq:top2}
		|\tilde{J}_{top,2}| = \sum_{\bI'\in \calL:\ell(\bI')=\ell', \ell'=a+r^{*}\cdot m/\eps} \left| J^{(\ell')}_{top, \bI} \right| 
		\end{equation}
		
		Adding the above two quantities~\eqref{eq:top1} and~\eqref{eq:top2} to the quantity~\eqref{eq:ind-hyp-sum} and observing that our recursive algorithm selects the guesses at a particular level that minimizes the number of discarded jobs, the lemma holds for a recursive call at level $r^{*}$.
		
	\end{proof}

	\begin{lemma}
		\label{lemma:makespan}	
		Our algorithm computes a solution with a makespan
		of at most $(1+6\epsilon)T$. \end{lemma}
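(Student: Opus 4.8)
The plan is to read the makespan bound off Lemma~\ref{lem:few-discard} applied to the top-level call of the recursion, and then to account for the final greedy insertion step. Concretely, I would invoke Lemma~\ref{lem:few-discard} with $\bar I=[0,T)$, $r=0$, $\bar J_{guess}=\emptyset$ and $\bar J_{bottom}=J$; the hypotheses hold trivially since in $\OPT$ every job is scheduled during $[0,T)$. This yields that the schedule the algorithm outputs, before the greedy step, discards a set $J_{disc}$ with
\[
|J_{disc}|\ \le\ \sum_{\ell'\in\calL_a}\ \sum_{\bar I'\in\calL:\,\ell(\bar I')=\ell'}\bigl|J_{top,\bar I'}^{(\ell')}\bigr|\ +\ \frac{5m\epsilon}{\log n}\,r_{\max}\,T ,
\]
using that the unique level-$0$ interval $[0,T)$ is not at a level in $\calL_a$, so every interval of a level in $\calL_a$ is a proper subinterval of $[0,T)$ and the constraint $\bar I'\subsetneq[0,T)$ from Lemma~\ref{lem:few-discard} is vacuous here.

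For the first sum I would use $J_{top}^{(\ell')}=\bigcup_{\bar I':\,\ell(\bar I')=\ell'}J_{top,\bar I'}^{(\ell')}$ together with the disjointness across levels from Proposition~\ref{prop:unique}, so the sum equals $\bigl|\bigcup_{\ell'\in\calL_a}J_{top}^{(\ell')}\bigr|$; since $\calL_a=\{a+r\cdot m/\epsilon+1:r\in\N_0\}$ for the offset $a$ coming from Lemma~\ref{lemma:shift}, that lemma bounds this by $\epsilon T$. For the second summand I would substitute $r_{\max}=\lceil\epsilon(\log n/\log\log n+1)/m\rceil$ and observe $\frac{5m\epsilon}{\log n}r_{\max}=O_{m,\epsilon}(1/\log\log n)$, which is at most $\epsilon$ once $n$ is large enough; small $n$ (namely $n\le n_0(m,\epsilon)$ for a suitable threshold) I would dispose of up front by solving the instance optimally by brute force in polynomial time. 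Hence $|J_{disc}|\le 2\epsilon T$.

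It then remains to see that the schedule for $J\setminus J_{disc}$ is feasible and contained in $[0,T)$, and that the greedy step adds exactly $|J_{disc}|$ to the makespan. For the former: the guessed jobs occupy their $\OPT$ slots (inside $[0,T)$); the bottom jobs are scheduled recursively inside subintervals of $[0,T)$ (applying the same reasoning inductively to those calls); the top jobs are placed by the EDF variant no later than their deadlines $d_j\le T$; and the release dates, deadlines, and the ``all predecessors already scheduled'' guard in the EDF routine ensure every precedence constraint among the non-discarded jobs is met (those involving $J_{disc}$ being ignored), while the ``only idle machines'' rule prevents conflicts with the guessed and recursively scheduled jobs. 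For the latter: by transitivity of $\prec$, each $j\in J_{disc}$ has an insertion time $t$ with all its currently-scheduled predecessors finished by $t$ and no currently-scheduled successor started before $t$ (as noted in the algorithm description), and inserting the slot $[t,t+1)$ and shifting preserves feasibility and raises the makespan by exactly $1$. Processing all discarded jobs therefore gives makespan $T+|J_{disc}|\le(1+2\epsilon)T\le(1+6\epsilon)T$.

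The quantitative heart of the argument is already done in Lemma~\ref{lem:few-discard} and Lemma~\ref{lemma:shift}, so the main thing to get right here is the bookkeeping of the third paragraph: that the guessed, recursively scheduled, and EDF-scheduled jobs across all recursion levels really glue into one globally feasible $m$-machine schedule inside $[0,T)$, and that the final insertion step is well defined and additive in the makespan. The generous slack between the $2\epsilon$ that falls out of the counting and the $6\epsilon$ claimed by the lemma also leaves room to absorb the earlier preprocessing steps (rounding $T$ up to a power of two, and making $\prec$ transitive).
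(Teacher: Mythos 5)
Your proposal is correct and follows essentially the same route as the paper: apply Lemma~\ref{lem:few-discard} at the root with $r=0$, bound the first term by $\epsilon T$ via Lemma~\ref{lemma:shift} and the second by plugging in $r_{\max}$, and then account for one extra unit of makespan per discarded job using transitivity of $\prec$. Your bookkeeping is in fact slightly tighter (you get $2\epsilon T$ discarded jobs for large $n$ versus the paper's $6\epsilon T$), and your explicit brute-force handling of small $n$ and the feasibility gluing argument are fine.
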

	
	\begin{proof}
		
		The input at the top level recursive call in our algorithm is an interval $\bI = [0, T)$ and the entire set of jobs $J$. Plugging in $r=0$ in Lemma~\ref{lem:few-discard} and using Lemma~\ref{lemma:shift},  the first term in~\eqref{eq:bound-discarded} is bounded by $\sum_{\ell'\in \calL_a} |J_{top}^{\ell} | \le \eps T$
		.
		Since  $r_{\max}:=\left\lceil \epsilon(\log n/\log\log n+1)/m\right\rceil $, the second term for $r=0$ in~\eqref{eq:bound-discarded} bounded by $5\eps^2 T/(\log\log n +1)$. 
		
		Hence, the number of discarded jobs in our algorithm is upper bounded by $6\eps T$. All the other jobs are scheduled within the interval $[0, T)$. We potentially need to introduce  one additional time-slot of the form $[t, t+1)$ for each discarded job. We observe that for each discarded job $j$ we can find
		a position to insert a time-slot for $j$ since we assumed the precendence constraints to be transitive and we obtained a feasible schedule for all
		non-discarded jobs. Hence, the total length of the schedule is at most $(1+6\eps)T$.
	\end{proof}  
	
	Combining Lemma~\ref{lemma:makespan} and Lemma~\ref{lemma:runtime} gives us the following theorem. 
	
	\begin{theorem}
		There exists an algorithm for the precedence constrained scheduling on identical parallel machines that is a $(1+\eps)$-approximation and runs in time $n^{O_{m,\eps}((\log n)^{m/\eps})}$. 
	\end{theorem}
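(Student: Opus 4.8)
The plan is simply to assemble the two headline lemmas already proved. First I would combine Lemma~\ref{lemma:makespan} with Lemma~\ref{lemma:runtime}: the former guarantees that, among all its guesses, the algorithm outputs a schedule of makespan at most $(1+6\epsilon)T$, where $T=\OPT$ is the guessed (power-of-two) optimal makespan, and the latter bounds the running time by $n^{(\log n)^{O_{m,\eps}(m/\epsilon)}}$, which is exactly the asymptotic form $n^{O_{m,\eps}((\log n)^{m/\eps})}$ claimed in the theorem. Recall that the guessed value $T$ is itself only a power-of-two upper bound on the true optimum, obtained by padding the instance with dummy jobs that must follow all others; as noted in Section~2, this inflates the optimal makespan by at most a factor $1+\epsilon$, so a $(1+6\epsilon)$-approximate schedule for the padded instance is a $(1+c\epsilon)$-approximate schedule for the original one, for an absolute constant $c$. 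Moreover, since the algorithm iterates over all candidate values of $T$ and over all guesses, and retains the feasible output minimizing the number of discarded jobs, it suffices that \emph{one} choice of $T$ together with the guesses specified before Lemma~\ref{lem:few-discard} attains the bound of Lemma~\ref{lemma:makespan}.

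Second, to convert this $(1+c\epsilon)$ guarantee into a clean $(1+\epsilon)$-approximation, I would run the entire procedure with the parameter $\epsilon':=\epsilon/c$ in place of $\epsilon$. This only rescales the constants hidden inside $k=(m\log n/\epsilon')^{m/\epsilon'+1}$ and inside the recursion depth $\frac{\epsilon'}{m}\log n$, all of which are absorbed by the $O_{m,\eps}(\cdot)$ in the exponent of the running time; the asymptotic form $n^{O_{m,\eps}((\log n)^{m/\eps})}$ is preserved. Putting these observations together yields the theorem.

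The substantive work has all been carried out in the preceding lemmas, so there is no genuine obstacle at this final step; the only point to be careful about is the bookkeeping — checking that the power-of-two padding, the factor $1+6\epsilon$ from Lemma~\ref{lemma:makespan}, and the rescaling $\epsilon\mapsto\epsilon/c$ compose to give $1+\epsilon$, and that the running-time bound of Lemma~\ref{lemma:runtime} is unaffected (up to the hidden constants) by this rescaling. Both checks are routine.
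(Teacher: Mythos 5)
Your proposal matches the paper's own (one-line) proof: the theorem is obtained simply by combining Lemma~\ref{lemma:makespan} and Lemma~\ref{lemma:runtime}, with the standard rescaling of $\epsilon$ left implicit. Your additional bookkeeping about the power-of-two padding and the substitution $\epsilon\mapsto\epsilon/c$ is exactly the routine step the paper omits, so the proposal is correct and takes essentially the same route.
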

	
	\bibliographystyle{plain}
	\bibliography{bibliography}
	
\end{document}